\documentclass[a4paper,reqno,11pt]{amsart}

\usepackage[a4paper, scale={0.72,0.74}, marginratio={1:1}, footskip=7mm, headsep=10mm]{geometry}

\usepackage[style=alphabetic, backend=bibtex8, citestyle=alphabetic, firstinits=true, maxnames=99, maxalphanames=4, sorting=nyt]{biblatex}

\usepackage{amsmath,amsfonts,amssymb,amsthm,amscd,mathrsfs,mathdots,bbm,color,bm,mathtools}
\usepackage{mathrsfs}
\usepackage{sansmath}
\numberwithin{equation}{section}
\usepackage{enumerate}
\usepackage{comment}
\usepackage{algorithm}
\usepackage{algorithmic}
\usepackage{textgreek}
\usepackage{color,soul}
\usepackage[dvipsnames]{xcolor}
\definecolor{gray1}{gray}{0.9}
\definecolor{gray2}{gray}{0.75}
\definecolor{gray3}{gray}{0.6}
\usepackage{a4wide}
\allowdisplaybreaks

\usepackage{graphicx}
\usepackage[linecolor=white,backgroundcolor=orange,bordercolor=red]{todonotes}

\usepackage[normalem]{ulem}

\usepackage{tikz}
\usetikzlibrary{decorations.pathreplacing}

\usepackage{newtxmath}
\usepackage{newtxtext}
\usepackage{subcaption}

\usepackage{stmaryrd}

\DeclareFontFamily{U}{mathx}{}
\DeclareFontShape{U}{mathx}{m}{n}{<-> mathx10}{}
\DeclareSymbolFont{mathx}{U}{mathx}{m}{n}
\DeclareMathAccent{\widecheck}{0}{mathx}{"71}

\usepackage[colorlinks,citecolor=MidnightBlue,urlcolor=MidnightBlue,linkcolor=MidnightBlue]{hyperref}

\newenvironment{myenumerate}{%
\renewcommand{\theenumi}{(\roman{enumi})}%
\renewcommand{\labelenumi}{\theenumi}%
\begin{list}{\labelenumi}
	{%
	\setlength{\itemsep}{0.4em}%
	\setlength{\topsep}{0.5em}%
	\setlength\leftmargin{2.45em}%
	\setlength\labelwidth{2.05em}%
	\setlength{\labelsep}{0.4em}%
	\usecounter{enumi}%
	}%
	}%
{\end{list}
}
\renewenvironment{enumerate}{
\begin{myenumerate}}%
{\end{myenumerate}}

\def\R{{\mathbb R}}
\def\C{{\mathbb C}}

\def\PR{\mathbb{P}}

\def\EE{{\mathbb E}}

\def\PP{{\mathbb P}}

\newtheorem{thm}{Theorem}

\numberwithin{thm}{section}
\theoremstyle{plain}
\newtheorem{lem}[thm]{Lemma}

\theoremstyle{definition}

\newtheorem{rem}[thm]{Remark}

\newcommand{\be}{\begin{equation}}
\newcommand{\ee}{\end{equation}}

\newcommand{\ben}{\begin{equation*}}
\newcommand{\een}{\end{equation*}}

\def\_#1{\def\next{#1}%
 \ifx\next\risingsign\expandafter\rising\else^{\underline{#1}}\fi}
\def\risingsign{^}
\def\rising#1{^{\overline{#1}}}

\usepackage[most]{tcolorbox}

\usepackage{tikz}
\usepackage{forest}
\forestset{
  default preamble={
   for tree={circle, draw, 
            minimum size=2.1em, % <-- added
            inner sep=1pt} 
  }
}

\usetikzlibrary{positioning}

\title{Maximal Minimal Spacing for Random Points}
\author{}
\usepackage{nomencl}
\makenomenclature

\begin{document}

\author[F.~D.~Cunden]{Fabio Deelan Cunden}
\address{Fabio Deelan Cunden -- Dipartimento di Matematica, Universit\`a degli Studi di Bari, i-70125 Bari, Italy, and INFN, Sezione di Bari, i-70126 Bari, Italy}
\email{fabio.cunden@uniba.it}

\author[N.~Cuppone]{Noemi Cuppone}
\address{Noemi Cuppone -- Department of Mathematics, King’s College London, Strand, London, WC2R 2LS, United Kingdom}
\email{noemi.cuppone@kcl.ac.uk}

\author[G.~Gramegna]{Giovanni Gramegna}
\address{Giovanni Gramegna -- Dipartimento di Fisica, Universit\`a degli Studi di Bari, i-70126 Bari, Italy, and INFN, Sezione di Bari, i-70126 Bari, Italy}
\email{giovanni.gramegna@uniba.it}

\author[P.~Vivo]{Pierpaolo Vivo}
\address{Pierpaolo Vivo -- Department of Mathematics, King’s College London, Strand, London, WC2R 2LS, United Kingdom}
\email{pierpaolo.vivo@kcl.ac.uk}

\maketitle
\begin{abstract}
From $N+1$ random points on a line we wish to select $M+1$ points so as to maximize the minimal spacing between them. We consider an initial configuration with independent and identically distributed  spacings. {Equivalently, the points are arrival times of a generic renewal process.}  For general spacing distributions, and for all $M\leq N$, we derive exact distributional identities for the {maximal minimal} spacing and obtain its asymptotic behavior.
The problem admits a reformulation in terms of a threshold-resetting random walk. The walk advances by successive random increments and is reset to the origin upon exceeding a fixed threshold. The probability that the optimal spacing exceeds a given value coincides with the probability that the walk completes at least $M$ reset cycles within $N$ steps. This yields an exact representation in terms of first-passage functionals of the walk. 
The same mapping suggests a numerical scheme for the max-min spacing problem in the regime of large  $N$ and $M$, whose accuracy is tested against the exact results obtained here.
\end{abstract}

\section{Introduction}
\subsection{The max-min spacing problem {in one dimension}}
Take a collection of $N+1$ points on the  real line, listed in increasing order: $P_0 < P_1 < \cdots < P_N$. 
We would like to choose a subset of $M+1$ points such that the minimal spacing between any two consecutive points is as large as possible.

    If $M=1$, the problem is trivial: we keep only the two extreme points.
    If $M=N$, there is no choice to make: the smallest gap between the original points provides the solution to our problem. 
    The interesting regime lies between those extreme cases, in which we must discard $N-M$ points.
{An important observation is that }any optimal choice must include the extreme points $P_0$ and $P_N$: removing an endpoint can only decrease the total range without decreasing the smallest spacing. 
%\end{itemize}

The problem can be formalised as follows. %Regardless of any randomness, 
Given the initial $N+1$ points, choosing $M+1$ of them is equivalent to choosing an index subset $i:=(i_0,\dots,i_M)$, such that $i_0<i_1<\cdots<i_{M-1}<i_M$. In our case, $i_0:=0$ and $i_M:=N$, as we are always choosing the extremal points. Let 
\begin{equation}\label{I_MN}
   I_{M,N}:=\Bigl\{ i=(i_0,\dots,i_M): 
0=i_0<i_1<\cdots<i_{M-1}<i_M=N \Bigr\} 
\end{equation}
denote the collection of admissible index subsets. There are 
\begin{equation}
    |I_{M,N}|=\binom{N-1}{M-1}
\end{equation}
of such selections.
For $i\in I_{M,N}$, the \textbf{minimal spacing} among the selected points is
\begin{equation}
\label{eq:min_spacing}
    S^{(i)} :=\min_{1\le j\le M}\bigl(P_{i_j}-P_{i_{j-1}}\bigr).
\end{equation}
We define the \textbf{max-min spacing}, as the maximum of the minimal spacing over all possible selections
\begin{equation}
\label{eq:def_maxmin}
S^*:=\max_{i\in I_{M,N}} S^{(i)}.
\end{equation}
The optimal selection may not be unique (see Fig.~\ref{fig:select}).

In order to obtain a {scale invariant measure of dispersion  that can be compared across different samples with different scales and ranges,} we normalize $S^*$ by the total range and thus define the \textbf{relative max-min spacing} as follows
\begin{equation}
\label{eq:def_relmaxmin}
\widetilde S:=\frac{S^*}{P_N-P_0}.
\end{equation}
We are interested in the case of \emph{random} initial configuration of $N+1$ points $\{P_{{n}}\}_{{{n}}=0}^N$. Our goal is to characterize the distribution of the max-min spacing \eqref{eq:def_maxmin}. How does the distribution of $S^*$ depend on $M$ and $N$? What is the typical size of $S^*$? How to characterize typical and atypical fluctuations away from the mean?

\subsection{Motivations}

The max--min spacing $S^*$ sits at the intersection of several classical themes in probability, statistical physics, and combinatorial optimization. 
The classical theory of spacings usually studies the gaps that are already
present in a random configuration~\cite{HajosRenyi1954,Pyke1965,Pyke1972,DavidNagaraja2003,ArnoldBalakrishnanNagaraja1992}. The largest spacing is by now well understood~\cite{Bairamov2010,Deheuvels1982,MijatovicVysotsky2015,Cunden2021}. 

Here the situation is
different: the gaps are not observed but created by selection. We ask how
large the minimal spacing can be made after selecting $M+1$ points from an
initial collection of $N+1$. Equivalently, one seeks the largest gap that
can be enforced simultaneously across $M$ consecutive blocks. The resulting
optimization ranges over strongly correlated configurations{, because different selections are built from overlapping sums of the same underlying random gaps. Nevertheless, it leads to a new
exactly solvable model in the broader setting of extreme-value statistics for
correlated random systems~\cite{MajumdarSchehr2024,SchehrMajumdar2012,MajumdarMounaixSchehr2013}. 

Problems involving unusually large gaps appear throughout probability and
mathematical physics.  In random matrix theory, for example, one studies large
gaps between neighboring eigenvalues or eigenangles~\cite{BenArousBourgade2013,FengWei2025,Dyson1962}. Our setting differs in one
essential respect: the large gaps are produced by a coarse-graining of the
configuration rather than by the original process itself.

There are several equivalent ways to think about the quantity $S^*$.
In statistical physics, $S^*$ is the largest hard-core exclusion radius
compatible with keeping $M+1$ particles from a configuration of $N+1$ {particles},
connecting the problem to R\'enyi's classical parking model~\cite{Renyi1958}.
In ecology, if $N+1$ individuals compete for territory and exactly $M+1$
survive, $S^*$ is the largest minimum territory each survivor can claim.
In operations research, the problem is a one-dimensional instance of the
$p$-dispersion problem~\cite{Kuby1987,ErkutNeuman1989,RaviRossSchneider1994}---select
$M+1$ facilities from $N+1$ candidate sites to maximize the minimum pairwise
distance---which is NP-hard in general metric spaces{~\cite{Erkut1990,SayahIrnich2017,FrancoUchoa2015}} but exactly solvable on
the line{~\cite{RaviRossSchneider1994,AkagiEtAl2018,araki2022max}}; our distributional results provide analytical benchmarks for random
instances, complementing the heuristic literature reviewed in~\cite{MartiEtal2022}.

The present work is also related to the companion paper~\cite{CCGV2026}, where
the objective is not the minimal spacing but the total dispersion (sum of distances). Both belong
to the broad family of maximum diversity/dispersion problems
\cite{FernandezEtal2013,MazzarasiEtal2021,MartiEtal2022}, although the probabilistic structure
turns out to be rather different.

\subsection{Outline of the paper}

Section~\ref{sec:model} introduces the model and its reformulation in terms of a threshold-resetting random walk. The main results and their asymptotic consequences are stated in Section~\ref{sec:results}. Section~\ref{sec:exactlysolvable} is devoted to the analysis of explicitly solvable models. Numerical aspects are considered in Section~\ref{sec:num}, where we describe an approximate algorithm for large instances.  Section~\ref{sec:proofs} contains the proofs. Finally, in section \ref{sec:conclusions} we offer some conclusions and outlook for future research.

\begin{figure}
    \centering
\includegraphics[width=0.7\linewidth]{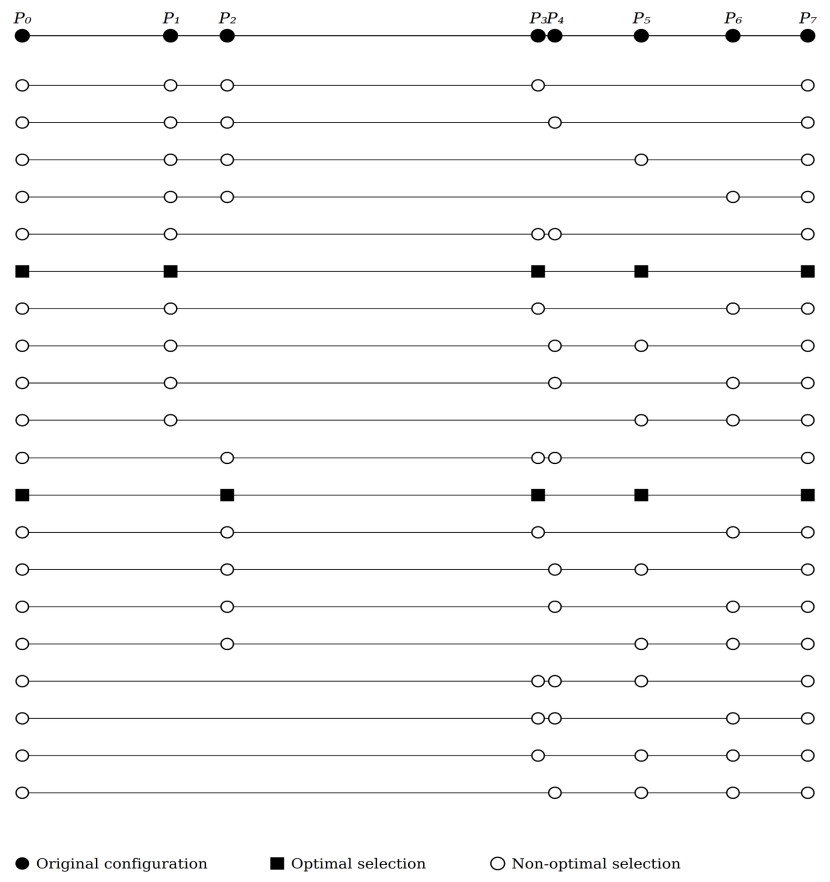}
    \caption{Choose $M+1$ points out of $N+1$ so as to maximize their minimal spacing. Here $N=7$, and $M=4$. The top line contains the initial configuration (black circles). The $\binom{6}{3}=20$ possible selections of points are in the lines below. The optimal ones (the max-min spacing  in this case is $(P_5-P_3)$) are in black squares, while the others are in white circles.}
    \label{fig:select}
\end{figure}
\section{Definition of the model}\label{sec:model} 

The problem is translation invariant; there is no loss of generality in fixing the leftmost point  $P_0=0$. We consider a random configuration of points on the line whose successive spacings are independent and identically distributed (i.i.d.).

Let $(T_j)_{j\geq1}$ be i.i.d. positive  random variables. Let, 
\begin{equation}
\label{def:RWP_i}
P_0:=0,\quad P_1:=T_1, \quad P_2:=T_1+T_2,\quad \ldots \quad P_N:=T_1+\cdots+T_N,\quad \ldots
\end{equation}

Thus $(P_n)_{n\geq0}$ is a \textbf{discrete-time random walk} with positive i.i.d. increments $T_j$'s. The points $(P_n)_{0\leq n\leq N}$ form our initial random configuration.

In this setting, for each choice of indices $i\in {I_{M,N}}$, the minimal spacing defined in~\eqref{eq:min_spacing} is  
\begin{equation}
\label{eq:min_spac_gaps}
    S^{(i)}=\min_{1\leq j\leq M}(T_{{i_{j-1}+1}}+\cdots+T_{i_j}).
\end{equation}
The max-min spacing 
\begin{equation} \label{def:MaxMinModel}
    S^*=\max_{i\in I_{M,N}} \min_{1\leq j\leq M}(T_{{i_{j-1}+1}}+\cdots+T_{i_j}) 
\end{equation}
is therefore the maximum of $|I_{M,N}|$ random variables correlated through the overlap induced by the selection of indices. Our main result shows that the seemingly complicated optimization problem admits a remarkably simple generating function representation.

\subsection{Threshold-resetting framework}

The max-min spacing problem can be mapped exactly to a \emph{threshold-resetting problem} \cite{BiswasMajumdarPal2025,BiroliMajumdarSchehr2026}: a discrete-time random walk on the positive half-line, which starts from zero and resets to zero every time it crosses a threshold $s>0$. This mapping allows for a representation of the tail distribution $\PP(S^*\geq s)$ of the max-min spacing in terms of the generating function of the first passage time of this auxiliary random walk.

The
\textbf{first-passage time} of $(P_n)_{n\geq0}$ at level $s>0$ is 
\begin{equation}\label{def:FirstPassageTime}
\tau_1{(s)}:=\min_{n> 0}\left\{n  \colon  P_n \geq s\right\} .
\end{equation}
In order to define the associated {reset-to-zero process}, we define the subsequent \textbf{crossing times} through the threshold $s>0$
\begin{equation}\label{def:ResetTime}
    \tau_{k}{(s)}:=\min_{  n   > \tau_{k-1}(s)}\left\{n  \colon   P_n-P_{\tau_{k-1}(s)}\geq s\right\},\quad k=1,2,\ldots,  
\end{equation} 
with $\tau_0(s):=0$. 
So, $\tau_1(s)$ is the first-passage time of the random walk at level $s$, $\tau_2(s)$ is the first time that, starting from $P_{\tau_1(s)}$, the random walk has another excursion larger or equal than $s$, and so on. The sequence of stopping times $(\tau_k(s))_{k\geq0}$ forms a renewal process.
Now we define the \textbf{reset-to-zero process} $\left(X_{{n}}^s\right)_{{{n}}\geq0}$ as  
\begin{equation}\label{def:ResetToZeroDef1}
    X_{{n}}^s:=P_{{n}}-P_{\Gamma_{{n}}^s} ,\quad \text{where $\Gamma_{{n}}^s:=\max_{k\geq0}\{\tau_k(s)  \colon  \tau_k(s) \leq {{n}}\}$}.
\end{equation}
Therefore, $X_{{n}}^s$ tracks the excursion since the most recent reset to the origin, and it is set to zero again as soon as it crosses $s$ (see Figure \ref{fig:RWResetting}). Alternatively, we can write
\begin{equation}\label{def:ResetToZeroDef2}
    X_0^s:=0 ,\quad X_{{n}}^s:=
    \begin{cases}
        X_{{{n}}-1}^s+T_{{n}}  &\text{if $X_{{{n}}-1}^s+T_{{n}}< s$}\\
        0  &\text{if $X_{{{n}}-1}^s+T_{{n}}\geq  s$}
    \end{cases} ,\quad {{n}}\geq 1.
\end{equation}
The reset-to-zero process $
\left(X_{{{n}}}^s\right)_{{{n}}\geq0}$ is a Markov chain on $[0,s]$.

Whenever the random walk $X_{{n}}^s$ resets to the origin, we say that it completed a {cycle}. Each cycle has {length}
\begin{equation}
    L_k(s):=\tau_k(s)-\tau_{k-1}(s), \quad \text{for $k\geq 1$}.
\end{equation}
Cycle lengths $\{L_k(s)\}_{k\geq 1}$ are i.i.d. discrete positive random variables and each cycle is independent of the others. Therefore, $L_k(s)$ is the {first passage time} through the threshold $s$ of the $k$-th independent random walk starting from the origin. See Figure~\ref{fig:RWResetting}.

We define  the \textbf{number of complete cycles} of the reset-to-zero process  $\left(X_{{{n}}}^s\right)_{{{n}}\geq0}$ up to time $N$
\begin{equation}  
\label{defKN}     K_N(s)=\max_{k\geq0}\left\{k  \colon \tau_k(s)\leq N\right\}.
\end{equation}
The key identity is: the max-min spacing \eqref{def:MaxMinModel} is at least $s$, if and only if the number of complete excursions of the reset-to-zero process \eqref{defKN} with threshold $s$ up to time $N$ is at least $M$. 
\begin{lem}
\label{lem:equiv}
   $ S^*\geq s \iff K_N(s)\geq M \iff \tau_M(s)\leq N$ .
\end{lem}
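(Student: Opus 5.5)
The second equivalence, $K_N(s)\geq M \iff \tau_M(s)\leq N$, follows directly from the definition \eqref{defKN}. The crossing times are strictly increasing: each $\tau_k(s)$ is a minimum over $n>\tau_{k-1}(s)$, and the increments $T_j$ are positive, so $\tau_k(s)$ is finite. Hence $K_N(s)=\max\{k:\tau_k(s)\le N\}\geq M$ holds exactly when $\tau_M(s)\leq N$. The real content is the first equivalence, $S^*\geq s\iff \tau_M(s)\le N$. I will prove it as two implications, using a greedy-selection argument. All the hypotheses needed are that the $T_j$ are positive, so that partial sums $P_b-P_a$ are strictly increasing in $b$ and decreasing in $a$.

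($\Leftarrow$) Suppose $\tau_M(s)\le N$. I will exhibit an admissible index set $i\in I_{M,N}$ with $S^{(i)}\ge s$. Take $i_j:=\tau_j(s)$ for $0\le j\le M-1$, and set $i_M:=N$. By \eqref{def:ResetTime} these indices are strictly increasing, and $P_{\tau_j}-P_{\tau_{j-1}}\ge s$ for $j\le M-1$. For the last block, $N\ge\tau_M(s)$ together with monotonicity of $P$ gives
\[
P_N-P_{\tau_{M-1}}\ge P_{\tau_M}-P_{\tau_{M-1}}\ge s.
\]
It remains to check that $i_{M-1}<N$; this holds since $\tau_{M-1}<\tau_M\le N$. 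Hence $S^*\ge S^{(i)}\ge s$.

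($\Rightarrow$) Suppose $S^*\ge s$. Since $I_{M,N}$ is finite, the maximum is attained, so there is $i\in I_{M,N}$ with $P_{i_j}-P_{i_{j-1}}\ge s$ for all $j$. The key step is the comparison claim that $\tau_j(s)\le i_j$ for every $j=0,\dots,M$. I will prove it by induction on $j$; the case $j=0$ is trivial.

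For the inductive step, assume $\tau_{j-1}\le i_{j-1}$. Monotonicity gives
\[
P_{i_j}-P_{\tau_{j-1}}\ge P_{i_j}-P_{i_{j-1}}\ge s,
\]
and $i_j>i_{j-1}\ge\tau_{j-1}$. So $i_j$ is an admissible candidate in the minimum defining $\tau_j(s)$, and therefore $\tau_j(s)\le i_j$. At $j=M$ this yields $\tau_M(s)\le i_M=N$.

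I expect this inductive comparison, that the greedy resets dominate any feasible selection, to be the only real obstacle. It is the standard exchange argument showing the greedy algorithm is optimal. The care needed is in using the non-strict inequality $\ge s$ consistently, to match the convention $P_n-P_{\tau_{k-1}}\ge s$ in \eqref{def:ResetTime}, and in noting that $S^*$ is a maximum over a finite set, so it is attained.
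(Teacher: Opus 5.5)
Your proof is correct and follows the paper's argument: the same selection $(0,\tau_1(s),\dots,\tau_{M-1}(s),N)$ for one direction, and the same domination $\tau_j(s)\le i^*_j$ for the other, where you spell out the induction that the paper compresses into ``since the $\tau_k$'s are first-passage times.'' One small quibble: positivity of the $T_j$ makes $\tau_k(s)$ finite only almost surely for i.i.d.\ gaps, not for every positive sequence, but the second equivalence does not need finiteness at all once you adopt the convention $\min\emptyset=\infty$.
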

Note that the above Lemma is an equality between events $\{S^*\geq s\}=\{K_N(s)\geq M\}=\{\tau_M(s)\leq N\}$ (the probability distribution of the $T_j$'s plays no role).

\begin{figure}
    \centering
\includegraphics[width=0.85\linewidth]{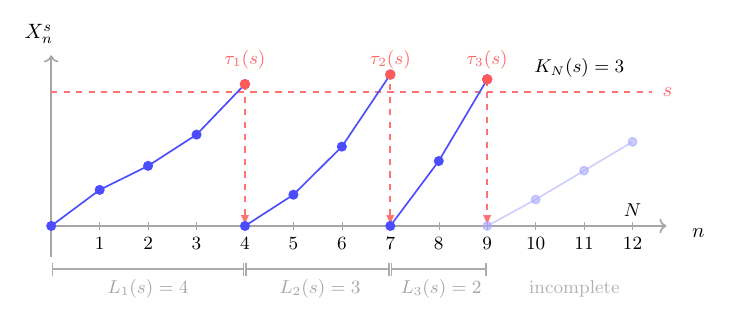}
\caption{\label{fig:RWResetting} Sketch of the threshold-resetting random walk. In this example, there are $K_N(s)=3$ complete cycles of length $L_1(s)=4$, $L_2(s)=3$, $L_3(s)=2$ in $N=12$ steps.}
\end{figure}

\section{Main results}
\label{sec:results}

\subsection{Distribution-free exact formula}

Our main result is an exact formula for the tail distribution of $S^*$, valid for all distributions of the i.i.d. gaps $T_j$'s.
 A central role is played by first-passage time at level $s>0$, that we rewrite as  
%For $s> 0$, define the  \textbf{} as 
\begin{equation}
    \tau_1(s)=\min\left\{k\geq 1\colon T_{1}+\cdots+T_{k}\geq s\right\}.
\end{equation}
Its probability generating function (PGF) is
\begin{equation}\label{eq:pgf_tau^s}
  p_{s}(z) = \EE[ z^{\tau_1(s)} ] = \sum_{k\geq 1} \PP\left(\tau_1(s)=k\right)z^k.
\end{equation} 
For a power series $a(z)=a_0+a_1z+a_2z^2+\cdots$, we write $[z^n]a(z):=a_n$  for the coefficient of $z^n$.

\begin{thm}[Distribution-free formula for the max-min spacing]
\label{thm:main}
For all $1\leq M\leq N$:
\begin{equation}\label{eq:main}
\PR(S^* \geq s)
   = [z^N]\frac{p_s(z)^M}{1-z}.
%    = [z^N](1-z)^{-1}p_s(z)^M.
\end{equation}

\end{thm}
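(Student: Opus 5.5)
We will derive Theorem~\ref{thm:main} from Lemma~\ref{lem:equiv}, which gives the event identity $\{S^*\geq s\}=\{\tau_M(s)\leq N\}$. Taking probabilities, the problem reduces to computing the distribution function of $\tau_M(s)$. Since $\tau_M(s)=L_1(s)+\cdots+L_M(s)$ is a sum of cycle lengths, we first establish that the $L_k(s)$ are i.i.d. copies of $\tau_1(s)$. This follows because each $\tau_{k-1}(s)$ is a stopping time for the filtration generated by $(T_j)$. By the strong Markov property, or more elementarily since the increments $T_j$ are i.i.d., the shifted sequence $(T_{\tau_{k-1}(s)+j})_{j\ge1}$ is independent of $\mathcal F_{\tau_{k-1}(s)}$ and has the same law as $(T_j)_{j\geq1}$. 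Also $L_k(s)$ is the same measurable functional of the shifted sequence as $\tau_1(s)$ is of the original. An induction on $k$ then gives independence and equidistribution. Finiteness of each $L_k(s)$ a.s. holds because the $T_j>0$ are i.i.d., so $P_n\to\infty$ a.s.; even without this, working with defective PGFs would not change the argument.

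Given this, the PGF of $\tau_M(s)$ is $\EE[z^{\tau_M(s)}]=p_s(z)^M$ for $|z|\le1$, so $\PP(\tau_M(s)=n)=[z^n]p_s(z)^M$. Summing over $n\le N$ gives
\begin{equation*}
\PP(\tau_M(s)\le N)=\sum_{n=0}^{N}[z^n]p_s(z)^M=[z^N]\frac{p_s(z)^M}{1-z}.
\end{equation*}
The last equality is the standard fact that multiplying a power series by $(1-z)^{-1}=\sum_{m\ge0}z^m$ turns its coefficients into partial sums (a Cauchy product with all-ones coefficients). This step is valid at the level of formal power series, and also analytically for $|z|<1$ since all coefficients are nonnegative and bounded by $1$.

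The only step needing care is the i.i.d. property of the cycle lengths, that is, justifying the renewal structure at the random times $\tau_{k-1}(s)$. We plan to handle it by direct decomposition. We write $\PP(L_1=\ell_1,\dots,L_M=\ell_M)$ as the probability of an intersection of events, each depending on a disjoint block $T_{\ell_1+\cdots+\ell_{k-1}+1},\dots,T_{\ell_1+\cdots+\ell_k}$. These events factor into $\prod_k\PP(\tau_1(s)=\ell_k)$ by independence of the $T_j$. This avoids invoking the strong Markov property abstractly and immediately yields $\EE[z^{\tau_M(s)}]=p_s(z)^M$.
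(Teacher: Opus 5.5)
Your proposal is correct and follows essentially the same route as the paper: Lemma~\ref{lem:equiv} reduces the problem to $\PP(\tau_M(s)\le N)$, the cycle lengths are i.i.d.\ copies of $\tau_1(s)$ so the generating function of $\tau_M(s)$ is $p_s(z)^M$, and multiplying by $(1-z)^{-1}$ turns its coefficients into the partial sum $\sum_{n\le N}$. Your block-decomposition argument for the renewal property is a welcome extra detail, since the paper simply asserts that the cycle lengths are i.i.d.
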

For a numerical illustration, see Fig.~\ref{fig:max-min}.

\begin{lem}[Telescopic formula] 
\label{lem:telescopic} The distribution of the first-passage time $\tau_1(s)$ is 
\begin{equation}
  \PR(\tau_1(s)=k)= \PR(P_{k-1}< s)-\PR(P_{k}< s).
  \label{eq:PL}
\end{equation}
In particular,
\begin{equation}
\begin{aligned}
       p_s(z)
        &=z-(1-z)\sum_{k\geq1}\PR(P_{k}< s)z^{k}.
    \label{eq:pgf-telescopic} 
\end{aligned}
\end{equation}
 Moreover, if $\PP(T_1=0)=0$, then $p_s(z)$ is analytic in the whole complex plane $z\in\C$.
\end{lem}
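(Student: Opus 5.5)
The plan is to identify the event $\{\tau_1(s)>k\}$ with $\{P_k<s\}$, then sum by parts to get the generating function. Analyticity will follow from exponential decay of $\PR(P_k<s)$.

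\emph{Step 1: the pointwise formula.} Since the $T_j$ are positive, the walk $(P_n)$ is nondecreasing. Hence, for every $k\geq 0$,
\begin{equation*}
\{\tau_1(s)>k\}=\{P_1<s,\dots,P_k<s\}=\{P_k<s\},
\end{equation*}
with the convention $P_0=0<s$. Then $\PR(\tau_1(s)=k)=\PR(\tau_1(s)>k-1)-\PR(\tau_1(s)>k)$ gives \eqref{eq:PL} directly. We only need $T_j\ge 0$ for this; in particular ties $P_k=s$ cause no trouble, because the crossing condition in \eqref{def:FirstPassageTime} is $P_n\ge s$.

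\emph{Step 2: the generating function.} Write $q_k:=\PR(P_k<s)$, so $q_0=1$. Then
\begin{equation*}
p_s(z)=\sum_{k\ge1}(q_{k-1}-q_k)z^k=z\sum_{k\ge0}q_kz^k-\sum_{k\ge1}q_kz^k.
\end{equation*}
Isolating the $k=0$ term gives $z+(z-1)\sum_{k\ge1}q_kz^k$, which is \eqref{eq:pgf-telescopic}. For $|z|<1$ every rearrangement is justified because $0\le q_k\le1$.

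\emph{Step 3: analyticity (the only nontrivial point).} It suffices to show that $q_k$ decays faster than any geometric sequence. Then the series $\sum_k q_k z^k$ is entire, and so is $p_s(z)$ by \eqref{eq:pgf-telescopic}.

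Use a Chernoff bound: for any $\lambda>0$,
\begin{equation*}
q_k=\PR(e^{-\lambda P_k}>e^{-\lambda s})\le e^{\lambda s}\bigl(\EE e^{-\lambda T_1}\bigr)^k.
\end{equation*}
Since $\PP(T_1=0)=0$, dominated convergence gives $\phi(\lambda):=\EE e^{-\lambda T_1}\to0$ as $\lambda\to\infty$. Given any $R>0$, choose $\lambda$ with $\phi(\lambda)R<1$. Then $\sum_k q_kR^k\le e^{\lambda s}\sum_k(\phi(\lambda)R)^k<\infty$. So the radius of convergence is infinite, and $p_s$ extends to an entire function.

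The main care is in this last step: we must use the Laplace transform tending to $0$ rather than just $\phi(\lambda)<1$, since the latter only yields a finite radius.
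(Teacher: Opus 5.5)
Your proof is correct. Steps 1--2 match the paper's argument. The paper writes $\{\tau_1(s)=k\}=\{P_{k-1}<s,\ P_k\ge s\}$ and subtracts $\PR(P_{k-1}<s,\,P_k<s)=\PR(P_k<s)$, which rests on the same monotonicity as your identity $\{\tau_1(s)>k\}=\{P_k<s\}$. Your remark that $T_j\ge 0$ suffices there is right.

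The genuine difference is in the analyticity step. The paper argues combinatorially. With $q(a)=\PR(T_1\le a)$ and $m=\lfloor s/a\rfloor$, the event $P_k\le s$ allows at most $m$ of the $k$ jumps to exceed $a$. A binomial estimate then gives $\PR(P_k\le s)\le 2^k q(a)^{k-m}$, hence $\limsup_k \PR(P_k<s)^{1/k}\le 2q(a)\to 0$ as $a\downarrow 0$, and Cauchy--Hadamard concludes.

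Your Chernoff bound $\PR(P_k<s)\le e^{\lambda s}\phi(\lambda)^k$, with $\phi(\lambda)=\EE e^{-\lambda T_1}\downarrow \PP(T_1=0)=0$, is shorter. It gives an explicit geometric majorant for every $k$, so you can bound $\sum_k \PR(P_k<s)R^k$ directly without passing through a $\limsup$. Optimizing over $\lambda$ also gives a quantitative Cram\'er-type decay rate rather than just ``faster than any geometric sequence''.

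The paper's route, in exchange, is fully elementary: it uses only the distribution function of $T_1$ near $0$, with no transforms or dominated convergence. It also makes the mechanism visible: staying below $s$ for $k$ steps forces all but boundedly many jumps to be tiny. Both arguments use the hypothesis $\PP(T_1=0)=0$ in equivalent forms, $q(a)\downarrow 0$ versus $\phi(\lambda)\downarrow 0$.
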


Let us spell out the {procedure to compute the tail distribution~\eqref{eq:main} of $S^*$}. The model is defined by the distribution of the i.i.d. {spacings} $T_j$'s. The distribution of the partial sums
$P_k=T_1+\cdots+T_k$ is the $k$-fold convolution of the distribution of the $T_j$'s. Evaluating
the distribution functions of these partial sums at the threshold $s$ gives,
through the identity \eqref{eq:PL}, the law of the first-passage
time $\tau_1(s)$. Summing these probabilities with weights $z^k$ gives the
generating function $p_s(z)$ in \eqref{eq:pgf-telescopic}. Finally, inserting
$p_s(z)$ into the coefficient formula \eqref{eq:main} yields the tail
probability $\PP(S^*\geq s)$ of the max--min spacing. Schematically: 
\[
    \PP(T\geq t)
\,\stackrel{\text{convolution}}{\rightsquigarrow}\,
    \PP(P_k\geq s)=\PP(T_1+\cdots+T_k\geq s)
\,\stackrel{\eqref{eq:PL}}{\rightsquigarrow}\,
    \mathcal \PP(\tau_1(s)=k)
\,\stackrel{\eqref{eq:pgf-telescopic}}{\rightsquigarrow}\,
    p_s(z)
    \,	\stackrel{\eqref{eq:main}}{\rightsquigarrow}\,
    \PP(S^*\geq s).
\]

\subsection{Saddle-point asymptotics}
Formula~\eqref{eq:main} is amenable to a saddle-point calculation in the limit of large $N$ and $M$ with fixed ratio $M/N=\alpha$. Since $p_s(z)^M/(1-z)$ is analytic in the unit disk $\{z\in\C\colon |z|< 1\}$, Cauchy's integral formula gives
\begin{equation}\label{eq:P}
\PR(S^*\ge s)
  = \frac{1}{2\pi \mathrm{i}}\oint_{\gamma}
  \frac{p_s(z)^M}{(1-z)\,z^{N+1}}\,dz= \frac{1}{2\pi \mathrm{i}}\oint_{\gamma}
  \frac{e^{Ng(z)}}{(1{-}z)\,z}\,dz,
\end{equation}
where $\gamma$ is any circle centred at $z=0$ and contained in the unit disk, and $g(z)=\alpha\log p_s(z)-\log z$.

For $z>0$, define the  tilted first-passage time $\tau_1(s,z)$ by
\begin{equation}
   \mathbb{P}(\tau_1(s,z) = k)
= \frac{\mathbb{P}(\tau_1(s) = k)z^k }{p_s(z)}. 
\end{equation}
Its generating function is 
\begin{equation}
  \EE[w^{\tau_1(s,z)}] =\sum_{k\geq 1} \PP\left(\tau_1(s,z)=k\right)w^k= p_s(wz)/p_s(z).
\end{equation}
When $z=1$, one recovers the original law: $\tau_1(s,1)=\tau_1(s)$. Values $z>1$ bias the distribution toward larger passage times, while $z<1$ favor smaller ones.

The mean and the variance are:
\begin{equation}\label{eq:TiltedMeanVar}
\mathbb{E}[\tau_1(s,z)]
= \frac{z\;p_s'(z)}{p_s(z)},
\qquad
\operatorname{Var}[\tau_1(s,z)]
 = z\frac{d}{dz}\mathbb{E}[\tau_1(s,z)].  
\end{equation}
The saddle point equation (see Sec.~\ref{sec:proofs}) becomes
\begin{equation}\label{eq:saddle}
\alpha \mathbb{E}[\tau_1(s,z)]=1.
\end{equation}
Define the `typical value'  $s^*$ of $S^*$ by the equation 
\begin{equation}\label{eq:typical}
\alpha \mathbb{E}[\tau_1({s^*},1)]=1.
\end{equation}
We can now state the large deviation theorem.
\begin{thm}[Large deviations]
\label{thm:LD}
Let $(T_j)_{j\geq1}$ be i.i.d. strictly positive random variables, and 
let $\alpha\in (0,1)$. Assume that, for each value of $s$, 
the saddle-point equation
\begin{equation}
\label{eq:assump1}
    \alpha \EE[\tau_1(s,z)]=1
\end{equation}
admits a positive solution $z(s)>0$, and that
\begin{equation}
\label{eq:assump2}
    \operatorname{Var}(\tau_1(s,z(s)))>0.
\end{equation}
Let $s^*$ be defined by~\eqref{eq:typical}. Suppose that $N\to\infty$ with $M=\lfloor \alpha N\rfloor$. Then, for $s>s^*$,
\begin{equation}\label{eq:right-box}
 \PP(S^*\ge s) =
  \frac{e^{-N\psi(s)}}{(1{-}z(s))\,
  \sqrt{2\pi \alpha N\sigma^2(s)}}\left[1+o(1)\right],
\end{equation}
while for $s<s^*$,
\begin{equation}\label{eq:left-box}
\PP(S^*<s) =
  \frac{e^{-N\psi(s)}}{(z(s){-}1)\,
  \sqrt{2\pi \alpha N\sigma^2(s)}}\left[1+o(1)\right].
\end{equation}
Here, 
    \begin{equation}\label{eq:psi}
    \psi(s)=\ln z(s)-\alpha\log p_s(z(s)),\qquad\sigma^2(s) =\operatorname{Var}[\tau_1(s,z(s))]. 
    \end{equation}
\end{thm}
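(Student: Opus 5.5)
We start from the Cauchy representation \eqref{eq:P} of Theorem~\ref{thm:main}, with $g(z)=\alpha\log p_s(z)-\log z$. By Lemma~\ref{lem:telescopic}, $p_s$ is entire when $\PP(T_1=0)=0$ and has nonnegative coefficients. The integrand has a simple pole at $z=1$ and no other singularity except $z=0$. The plan is a standard saddle-point analysis on a circle $|z|=z(s)$, treating the pole at $z=1$ separately.

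\textbf{Location of the saddle.} On the positive axis, $z g'(z)=\alpha\EE[\tau_1(s,z)]-1$. Differentiating once more gives $z\frac{d}{dz}(z g'(z))=\alpha\operatorname{Var}[\tau_1(s,z)]>0$ by \eqref{eq:TiltedMeanVar} and \eqref{eq:assump2}. Hence $z\mapsto \alpha\EE[\tau_1(s,z)]$ is strictly increasing and the solution $z(s)$ of \eqref{eq:assump1} is unique. Since $\EE[\tau_1(s)]$ increases with $s$, \eqref{eq:typical} shows that for $s>s^*$ we have $\alpha\EE[\tau_1(s,1)]>1$, which forces $z(s)<1$; similarly $s<s^*$ gives $z(s)>1$.

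\textbf{Case $s>s^*$.} The circle $\gamma=\{|z|=z(s)\}$ lies inside the unit disk. Write $z=z(s)e^{\mathrm{i}\theta}$, so that
\[
\frac{p_s(z)^M}{z^N}=\frac{p_s(z(s))^M}{z(s)^N}\,\EE\bigl[e^{\mathrm{i}\theta\tau_1(s,z(s))}\bigr]^M .
\]
The integral is then exactly a local limit computation: the coefficient extraction equals $e^{-N\psi(s)}$ times the probability that a sum of $M$ i.i.d.\ copies of the tilted variable $\tau_1(s,z(s))$ takes the value $N$. This probability is weighted by the factor $1/(1-z)$, which is smooth near $\theta=0$. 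The mean of that sum is $M\EE[\tau_1(s,z(s))]=N$ up to $O(1)$, because $M=\lfloor\alpha N\rfloor$. Near $\theta=0$ we use $g(z(s)e^{\mathrm{i}\theta})=g(z(s))-\tfrac12\alpha\sigma^2(s)\theta^2+O(\theta^3)$. The Gaussian integral gives $\bigl(2\pi\alpha N\sigma^2(s)\bigr)^{-1/2}$, and the prefactor contributes $1/(1-z(s))$. This yields \eqref{eq:right-box}. The $O(1)$ discrepancy between $M$ and $\alpha N$ shifts the exponent only by a bounded amount. To handle it exactly, I would define $z(s)$ with $\alpha$ replaced by $M/N$, or absorb it into the $1+o(1)$ by noting that the correction is $O(1/N)$ per unit, after checking the exponent carefully.

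\textbf{Case $s<s^*$.} We use the identity $\PP(S^*<s)=1-\PP(S^*\geq s)$. Deforming $\gamma$ outward to $|z|=z(s)>1$, which is allowed because $p_s$ is entire, crosses the pole at $z=1$. Its residue is $-p_s(1)^M=-1$, and this cancels the $1$. The remaining integral over the larger circle is analysed exactly as above. The factor $1/(1-z)$, evaluated at $z(s)>1$ and combined with the sign from the residue, produces the prefactor $1/(z(s)-1)$ in \eqref{eq:left-box}.

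\textbf{Main obstacle.} The hardest step is the tail estimate away from $\theta=0$. One needs $|p_s(z(s)e^{\mathrm{i}\theta})|<p_s(z(s))$ for $\theta\neq0$, so that the contribution from $|\theta|>\delta$ is exponentially smaller. This holds by the triangle inequality on the nonnegative coefficients, unless the support of $\tau_1(s)$ lies in a lattice $d\mathbb{Z}+c$ with $d>1$. I would first argue that $\PP(\tau_1(s)=1)>0$ or $\PP(\tau_1(s)=k)>0$ for consecutive $k$, since $\tau_1(s)$ is a first-passage time of a walk with positive steps. If aperiodicity fails, for example for deterministic $T$, the Gaussian approximation must be replaced by a sum over the finitely many maximal points on the circle. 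In that case the stated formula would require the periodicity to be excluded, and I would add this as an implicit hypothesis.
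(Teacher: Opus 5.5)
Your proposal is correct and follows essentially the paper's own route: Cauchy's formula on the circle $|z|=z(s)$, uniqueness of the saddle from $z\frac{d}{dz}\EE[\tau_1(s,z)]=\operatorname{Var}[\tau_1(s,z)]>0$, a Gaussian expansion at $\theta=0$, and, for $s<s^*$, an outward deformation past $z=1$ whose residue $-1$ cancels the constant in $\PP(S^*<s)=1-\PP(S^*\geq s)$. Your two caveats concern points the paper's proof does not address, and they resolve as follows: no periodicity hypothesis is needed, because the support of $\tau_1(s)$ is always a block of consecutive integers (a missing $k$ between two support points would leave an open interval of length $\operatorname{ess\,sup}T_1$ with no mass under the law of $P_{k-1}$, while that law charges both sides of it, which swapping the summands one at a time rules out), so \eqref{eq:assump2} already gives $|p_s(z(s)e^{\mathrm{i}\theta})|<p_s(z(s))$ for $0<|\theta|\leq\pi$; and for $M=\lfloor\alpha N\rfloor$ only your first fix works, since the exponent is shifted by $-\{\alpha N\}\log p_s(z(s))$, which is bounded but does not tend to zero, so \eqref{eq:right-box}--\eqref{eq:left-box} must be read with $\alpha=M/N$, exactly as the paper's rewriting $p_s(z)^Mz^{-N}=e^{Ng(z)}$ tacitly does.
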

 Equation~\eqref{eq:saddle} selects the tilted law under which the typical value of $\tau_1(s,z)$ matches the constraint $M/N=\alpha$. The theorem identifies not only the  rate function $\psi(s)$ governing the probability of atypical fluctuations of $S^*$ away from its typical value \cite{Touchette2009}, but also the subleading prefactors arising from Gaussian fluctuations around the saddle point~\cite{Daniels1954}.  
For an illustration of the large deviation law above in the exponential case, see Fig. \ref{fig:LDF}.

\begin{rem}
      Assumptions~\eqref{eq:assump1} and~\eqref{eq:assump2} are satisfied whenever the law of the increments $T_j$'s has strictly positive density on $(0,+\infty)$.  This covers a broad and natural class of models.
\end{rem}

\begin{figure}
    \centering
\includegraphics[width=0.75\linewidth]{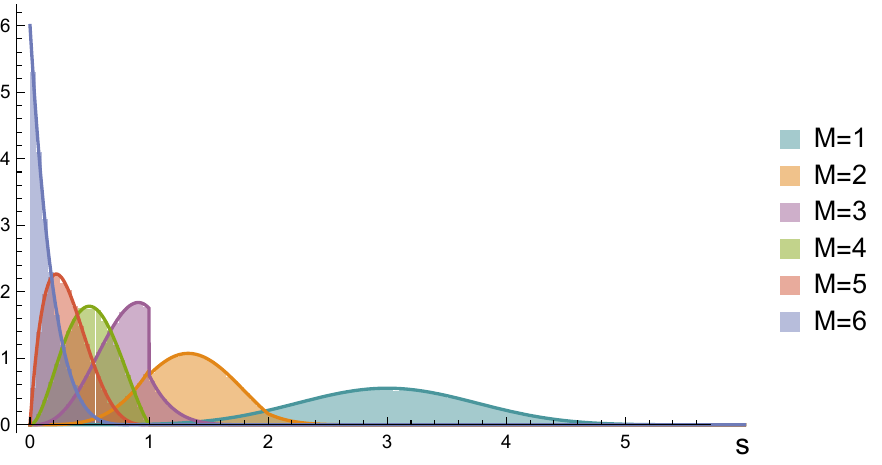}
    \caption{Max-min spacing $S^*$ for gaps $(T_j)_{j\geq1}$ uniformly distributed in the interval $[0,1]$. Numerical simulations (histograms) compared to the explicit formula~\eqref{eq:main}. Here $N=6$, and the sample size is $10^5$. {Note that for uniform random variables in $[0,1]$, the distribution $\PP(P_k<s)$  is defined piecewise, and has jumps in its derivatives at integer values of $s$. This is clearly visible for $M=3$, where the density of $S^*$ has a jump at $s=1$.} }
    \label{fig:max-min}
\end{figure}

\section{Exactly solvable cases}\label{sec:exactlysolvable}

The two models below can be analyzed explicitly because the first-passage
generating function $p_s(z)$ can be computed in closed form.
In both cases this
ultimately comes from a memoryless structure of the increments. Exponential gaps lead to Poisson counting on the line, while geometric gaps produce the corresponding discrete
binomial picture (see Remark~\ref{rem:memoryless}).

\subsection{Notation}
We use the symbol $\stackrel{d}{=}$ to denote \textbf{identity in distribution}, and $\stackrel{d}{\to}$ for \textbf{convergence in distribution}. 
A \textbf{Gamma} random variable of shape $n>0$ and rate $\lambda>0$, denoted $\operatorname{Gamma}(n,\lambda)$, is a random variable with  density $f(x)=\frac{\lambda}{\Gamma(n)}(\lambda x)^{n-1}e^{-\lambda x}$, for $x>0$. When $n=1$, this reduces to an \textbf{exponential} random variable with rate $\lambda$, denoted $\operatorname{Exp}(\lambda)$.
 A \textbf{Beta} random variable with parameters $a,b>0$, denoted $\operatorname{Beta}(a,b)$, is a random variable with  density $f(x)=\frac{\Gamma(a+b)}{\Gamma(a)\Gamma(b)}x^{a-1}(1-x)^{b-1}$, for $0<x<1$.
Finally, we denote by $\mathcal{N}(\mu,\sigma^2)$ a \textbf{Gaussian} random variable with mean $\mu \in \R$ and variance $\sigma^2>0$.

\subsection{Exponential gaps}
\label{sec:exp} 
If $T_i$ is exponential with rate $1$, then each block sum
$
T_{i_{j-1}+1}+\cdots+T_{i_j}
$
is Gamma distributed with shape $n_j=i_j-i_{j-1}$ ($j=1,\ldots,M$) and unit rate, and the $M$ block sums are independent. 
Hence $S^{(i)}$ in Eq. \eqref{eq:min_spac_gaps} is the minimum of $M$ independent Gamma variables with shapes $n_j$ and rate $1$. The survival function is
\begin{equation}
    \mathbb P(S^{(i)}\ge s)
=
\prod_{j=1}^M \mathbb P(T_{i_{j-1}+1}+\cdots+T_{i_j}\ge s)
=
\prod_{j=1}^M \overline{F}_{\Gamma(n_j,1)}(s),
\end{equation}
{where $\overline{F}_{\Gamma(n_j,1)}(s)$ is the tail cumulative distribution of the $T$-sum.}
The corresponding density is obtained by differentiation
\begin{equation}
\label{eq:density_S_i}
f_{S^{(i)}}(s)
=
\sum_{j=1}^M
f_{\Gamma(n_j,1)}(s)
\prod_{\ell\ne j}\overline F_{\Gamma(n_\ell,1)}(s),\qquad s>0,
\end{equation}
where
\[
f_{\Gamma(n,1)}(s)=\frac{s^{n-1}e^{-s}}{(n-1)!},
\qquad
\overline F_{\Gamma(n,1)}(s)=e^{-s}\sum_{r=0}^{n-1}\frac{s^r}{r!} .
\]

The distribution of $S^*$ turns out to be considerably simpler than the individual densities~\eqref{eq:density_S_i}.

\begin{thm}[Max-min spacing for exponential gaps]
\label{thm:main-exp}
    Let $0=P_0<P_1<\cdots< P_N$ as above with gaps $(T_j)_{j\geq1}$ i.i.d. exponential with  rate $1$. Then, for any integer $M=1,\ldots,N$,
    \begin{equation}
    \label{eq:identity_S^*}
        S^*\stackrel{d}{=}\operatorname{{Gamma}}\left(N-M+1,M\right) ,
    \end{equation}
    and for all $M=2,\ldots,N$,
     \begin{equation}
      \label{eq:identity_Stilde}
    M \widetilde{S}\stackrel{d}{=}\operatorname{Beta}(N-M + 1,M-1) .
\end{equation}
    Equivalently, $S^*$ has density
    \begin{equation}
    \label{eq:density_S^*}
   f_{S^*} (s)= \frac{M}{(N-M)!}(Ms)^{N-M}e^{-M s},\quad s\geq0 ,
\end{equation}
while the relative max-min spacing $\widetilde S$  has density
\begin{equation}
\label{eq:density_Stilde}
       f_{\widetilde{S}} (s)  
    = \binom{N-1}{M-1} M(M-1) (Ms)^{N-M} \left(1- M s\right)^{M-2},\quad 0\leq s\leq \frac{1}{M}.
\end{equation}
\end{thm}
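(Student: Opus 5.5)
We start from Theorem~\ref{thm:main}, so the task reduces to computing $p_s(z)$ for unit-rate exponential gaps. The partial sums $P_k$ are the arrival times of a rate-one Poisson process, so $\PP(P_k<s)=\PP(\mathrm{Poi}(s)\geq k)$. Inserting this into the telescopic formula of Lemma~\ref{lem:telescopic} gives
\[
\PP(\tau_1(s)=k)=\PP(P_{k-1}<s)-\PP(P_k<s)=\PP(\mathrm{Poi}(s)=k-1)=e^{-s}\frac{s^{k-1}}{(k-1)!}.
\]
Hence $\tau_1(s)-1\sim\mathrm{Poi}(s)$, and
\[
p_s(z)=z\,e^{-s(1-z)}.
\]

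With this, formula~\eqref{eq:main} becomes
\[
\PP(S^*\geq s)=[z^N]\frac{z^M e^{-Ms(1-z)}}{1-z}=e^{-Ms}[z^{N-M}]\frac{e^{Msz}}{1-z}=e^{-Ms}\sum_{r=0}^{N-M}\frac{(Ms)^r}{r!}.
\]
This is exactly the survival function of $\operatorname{Gamma}(N-M+1,M)$, which proves~\eqref{eq:identity_S^*}. Differentiating in $s$ telescopes the sum and yields the density~\eqref{eq:density_S^*}.

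For $\widetilde S=S^*/P_N$, the marginal law of $S^*$ alone is not enough: we need the joint law of $(S^*,P_N)$. The plan is to use the standard fact that, conditionally on $P_N=\ell$, the points $(P_1,\dots,P_{N-1})$ are distributed as the order statistics of $N-1$ i.i.d.\ uniform points on $(0,\ell)$. Since $S^*$ is positively homogeneous in the configuration, $\widetilde S$ is then independent of $P_N\sim\operatorname{Gamma}(N,1)$, and $S^*=\widetilde S\,P_N$ is a product of independent factors.

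Two routes lead from here to the law of $\widetilde S$.

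\textbf{Moment route.} By independence,
\[
\EE[(S^*)^k]=\EE[\widetilde S^k]\,\EE[P_N^k].
\]
The left side is $\Gamma(N-M+1+k)/(\Gamma(N-M+1)M^k)$ and $\EE[P_N^k]=\Gamma(N+k)/\Gamma(N)$. Dividing, $\EE[(M\widetilde S)^k]$ equals the $k$-th moment of $\operatorname{Beta}(N-M+1,M-1)$. Since $M\widetilde S\in[0,1]$ is bounded, moments determine the law.

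\textbf{Direct route.} Lemma~\ref{lem:equiv} says $\{\widetilde S\ge u\}$ is the event that the uniform configuration admits $M$ consecutive blocks, each of relative length $\ge u$. Equivalently, after removing the slack $Mu$ the reset walk still completes enough cycles, which connects to a spacings computation giving $(1-Mu)$ powers.

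The moment route is the cleaner one, and the density~\eqref{eq:density_Stilde} then follows by the change of variables from the Beta density; the constant matches since $\binom{N-1}{M-1}M(M-1)\cdot M^{N-M}M^{M-2}\cdots$ reduces to the Beta normalization. The condition $M\geq 2$ is needed so that the Beta parameter $M-1$ is positive.

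The main obstacle is justifying the independence of $\widetilde S$ and $P_N$. The plan is to write $\widetilde S$ as a measurable function of the normalized spacings $(T_1/P_N,\dots,T_N/P_N)$, which is Dirichlet$(1,\dots,1)$ and independent of $P_N$ by the standard Gamma–Dirichlet factorization. Homogeneity then gives $\widetilde S=\max_i\min_j\sum (T/P_N)$, which depends only on the normalized spacings.
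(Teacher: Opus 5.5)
Your proof is correct, and most of it follows the paper. The first half is identical: the telescopic formula gives $\PP(\tau_1(s)=k)=e^{-s}s^{k-1}/(k-1)!$, so $p_s(z)=z\,e^{s(z-1)}$, and Theorem~\ref{thm:main} gives the Poisson-tail form of $\PP(S^*\ge s)$.

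For $\widetilde S$ you use the same structural input as the paper: $P_N\sim\operatorname{Gamma}(N,1)$ is independent of the normalized configuration, and $S^*=P_N\widetilde S$. The paper proves the independence by an explicit change of variables with Jacobian $w^{N-1}$, while you quote the Gamma--Dirichlet factorization.

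The genuine difference is the last step. The paper takes the claimed density of $\widetilde S$, computes its Mellin convolution with the $\operatorname{Gamma}(N,1)$ density, and checks that this reproduces $f_{S^*}$. Concluding from that check tacitly requires that the law of $\widetilde S$ is determined by the law of $P_N\widetilde S$, which the paper never states. You instead divide moments, obtaining $\EE[(M\widetilde S)^k]=\Gamma(N-M+1+k)\Gamma(N)/(\Gamma(N-M+1)\Gamma(N+k))$. You recognize these as the $\operatorname{Beta}(N-M+1,M-1)$ moments and close with determinacy of the moment problem on $[0,1]$, which applies because $M\widetilde S\le 1$. Your version therefore derives the Beta law rather than verifying a guessed density, and it makes the uniqueness step explicit. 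The paper's version is a short, self-contained integral check that outputs the density formula directly.

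Two small remarks. Your ``direct route'' is only a heuristic and can be dropped. Your normalization check is garbled as written: there is no factor $M^{M-2}$. The constant is $\binom{N-1}{M-1}M(M-1)=M/B(N-M+1,M-1)$, where the factor $M$ is the Jacobian of $x=Ms$.
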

For a numerical illustration, see Fig.~\ref{fig:relmaxmin}.
The mean and variance are:
\begin{align}
    E S^*&=\frac{N-M+1}{M},%\sim\frac{1-\alpha}{\alpha }&
    & \operatorname{Var} S^*&=\frac{ N-M+1}{M^2},\\%\sim\frac{1-\alpha}{\alpha^2}\frac{1}{N},\\
    E \widetilde{S}&=\frac{N-M+1}{MN},%\sim\frac{1-\alpha}{\alpha }\frac{1}{N}&
   &  \operatorname{Var} \widetilde{S}&=\frac{(M-1) (N-M+1)}{M^2 N^2(N+1)}.%\sim\frac{1-\alpha}{\alpha}\frac{1}{N^3}.
\end{align}

Note that if $(T_j)_{j\ge1}$ are i.i.d.\ $\operatorname{Exp}(1)$ variables, then  
$P_j=T_1+\cdots+T_j$, $j\geq1$  are the arrival times of a 
unit–rate Poisson process on $\mathbb R_{\ge0}$. The identity in distribution~\eqref{eq:identity_S^*} is equivalent to the equality
\begin{equation}\label{eq:claimed_identity}
   \PP(S^*\geq s)
   =\sum_{k=0}^{N-M}\frac{(Ms)^k}{k!}e^{-Ms},
\end{equation}
or, equivalently,
\begin{equation}
    \label{eq:claimed_identity_Poisson}
    \PP(S^*\geq s)
=
\PP\!\left(\mathrm{Poisson}(Ms)\le N-M\right),
\end{equation}
where $\operatorname{Poisson}(\lambda)$ denotes a \textbf{Poisson} random variable with parameter $\lambda>0$, with distribution $p(k)=\frac{\lambda^k}{k!}e^{-\lambda}$, for $k=0,1,\ldots$. 
\begin{figure}
    \centering
\includegraphics[width=0.75\linewidth]{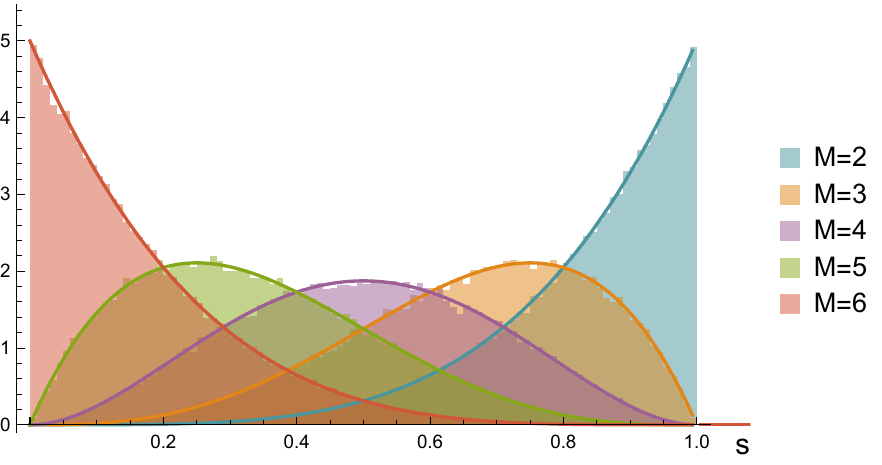}
    \caption{Relative max-min spacing $M\widetilde S$ for exponential gaps. Numerical simulations (histograms) compared to the explicit formula~\eqref{eq:identity_Stilde}. Here $N=6$, and the sample size is $10^5$.}
    \label{fig:relmaxmin}
\end{figure}
\begin{rem}
\label{rmk:unif}
    The relative max-min spacing $\widetilde{S}$ for the model above coincides with the max-min spacing for $N-1$ uniform  
    points on the unit interval.
More precisely, let $U_1,\ldots, U_{N-1}$ be i.i.d. random variables uniformly distributed in the interval $[0,1]$, and denote  the corresponding order statistics by $0=:U_{0:N-1}<U_{1:N-1}<U_{2:N-1}<\cdots<U_{N-1:N-1}<U_{N:N-1}:=1$. Set $P_i:=U_{i:N-1}$, for  $i=0,\ldots,N$. Then, for all $2\leq M\leq N$, the max-min spacing between these points has density~\eqref{eq:density_Stilde}.
\end{rem}
\begin{rem}
The cases $M=1$ and $M=N$ 
 of Theorem~\ref{thm:main-exp} 
   reduce to classical identities: 
\begin{enumerate}
    \item When $M=1$, the only possible selection of indices is $i=(i_0,i_M)=(0,N)$, and the max-min spacing $S^*$ is simply the range of the sample, $P_N-P_0$. This is the sum $T_1+\cdots+T_N$ of $N$ i.i.d. exponential random variables with unit rate, that is a variable $\operatorname{Gamma}(N,1)$.
    \item For $M=N$, the only selection of indices is $i=(0,1,\ldots,N)$, and the max-min spacing $S^*$ is nothing but the minimal spacing in the original sample. This is the minimum $\min\{T_1,\ldots,T_N\}$ of $N$ i.i.d. exponential random variables with unit rate, namely   $\operatorname{Gamma}(1,N)$. 
\end{enumerate}  
\end{rem}

\begin{rem}
It is natural to compare the distribution of $S^*$ with that of the minimal spacing $S_\mathrm{rand}$  
obtained when the $M+1$ selected points are chosen at random. Let $S_{\rm rand}$ denote the minimal spacing obtained by choosing the
index set $i\in I_{M,N}$ uniformly at random, independently of the gaps.
For exponential gaps, $S_{\rm rand}$ is not in general exponentially
distributed. Conditional on the block lengths
$
    n_j=i_j-i_{j-1}$ ($j=1,\ldots,M$),
the $M$ retained spacings are independent Gamma random variables with
respective shapes $n_1,\ldots,n_M$ and unit rate. Hence
\begin{equation}
    \PP(S_{\rm rand}\geq s\mid n_1,\ldots,n_M)
    =
    e^{-Ms}\prod_{j=1}^M
    \sum_{r=0}^{n_j-1}\frac{s^r}{r!} .
\end{equation}
Since choosing $i$ uniformly from $I_{M,N}$ is equivalent to choosing uniformly
a composition $n_1+\cdots+n_M=N$ into $M$ positive parts, we obtain
\begin{equation}
\label{eq:Srand-tail-compositions}
    \PP(S_{\rm rand}\geq s)
    =
    \frac{e^{-Ms}}{\binom{N-1}{M-1}} \sum_{\substack{n_1,\ldots,n_M\geq1\\ n_1+\cdots+n_M=N}}
    \prod_{j=1}^M
    \sum_{r=0}^{n_j-1}\frac{s^r}{r!} .
\end{equation}
Equivalently, using the generating function identity
\begin{equation}
    \sum_{n\geq1} x^n\sum_{r=0}^{n-1}\frac{s^r}{r!}
    =
    \frac{x e^{sx}}{1-x} ,
\end{equation}
the tail probability can be written as

\begin{equation}
\label{eq:Srand-tail}
    \PP(S_{\rm rand}\geq s)
    =
    \sum_{k=0}^{N-M}
   \frac{\binom{N-k-1}{M-1}}{\binom{N-1}{M-1}}
    \frac{(Ms)^k}{k!}e^{-Ms}.
\end{equation}
This should be compared with the optimal tail in~\eqref{eq:claimed_identity}.
The two coincide in the two trivial cases $M=1$ and $M=N$; in general, $S^*\geq S_{\rm rand}$ and indeed, from $\binom{N-k-1}{M-1}\leq \binom{N-1}{M-1}$, we have a quantitative control on the inequality
\begin{equation}
   \PP\left(S^*\geq s\right)\geq  \PP\left(S_{\rm rand}\geq s\right).   
\end{equation}
\end{rem}

From the explicit formulae~\eqref{eq:density_S^*} and~\eqref{eq:density_Stilde}, the large-$N$ behavior follows by routine
analysis. We separate the typical  fluctuations from the
large–deviation regime. 
\subsubsection{Typical fluctuations}
Let $\alpha=M/N\in[0,1]$. As $N\to\infty$. 
\begin{enumerate}
    \item if $\alpha\in[0,1)$, then
       \begin{equation}
   \label{eq:conv_S^*1}
       \frac{M}{\sqrt{N-M+1}}\left(S^*-\frac{N-M+1}{M}\right)\stackrel{d}{\to}\mathcal{N}\left(0,1\right);
   \end{equation}
   \item  if $N-M+1=k$ is fixed (so that $\alpha=1$), then, 
   \begin{equation}
          \label{eq:conv_S^*2}
            MS^*\stackrel{d}{=}\operatorname{Gamma}(k,1).
   \end{equation}
\end{enumerate}
Similarly,
\begin{enumerate}
   \item if $\alpha\in(0,1)$, then
\begin{equation}
   MN \sqrt{\frac{N+1}{(M-1) (N-M+1)}}\left(\widetilde S-\frac{N-M+1}{MN}\right)\stackrel{d}{\to}  \mathcal{N}\left(0,1\right);
\end{equation}
\item if $M-1=k$ is fixed (so that $\alpha=0$),  then 
\begin{equation}
\label{eq:conv_Stilde_alpha0}
    N(1-M\widetilde S)
    \stackrel{d}{\to}
    \operatorname{Gamma}(k,1).
\end{equation}
\item if $N-M+1=k$ is fixed (so that $\alpha=1$), then 
\begin{equation}
\label{eq:conv_Stilde_alpha1}
    (M-1)M\widetilde S
    \stackrel{d}{\to}
    \operatorname{Gamma}(k,1) .
\end{equation}
\end{enumerate}
For a numerical illustration, see Fig.~\ref{fig:largesize}.

\subsubsection{Large deviations}
Using Stirling approximation on~\eqref{eq:density_S^*}-\eqref{eq:density_Stilde}, one gets the following large deviation formulae for the density of $S^*$ and $M\widetilde{S}$. (Formula~\eqref{eq:subleading-exp} can be alternatively obtained by specializing Theorem~\ref{thm:LD}.)

Let $0<\alpha<1$. Then, as $N\to\infty$ with $M=\lfloor \alpha N\rfloor$,
\begin{align}
\label{eq:subleading-exp}
-\frac{1}{N}\log f_{S^*}(s)
&=
\psi_{S^*}(s)
+ \frac{1}{N}\log\sqrt{\frac{2\pi (1-\alpha)}{\alpha^2 N}}
+ o\left(\frac{1}{N}\right),\\
\label{eq:subleading-exp_rel}
    -\frac{1}{N}\log f_{M \widetilde S}(s)
&=
\psi_{\widetilde{S}}(s)
+ \frac{1}{N}\log\left((1-s)^2\sqrt{\frac{2\pi(1-\alpha)}{\alpha^3 N}}\right)
+ o\!\left(\frac{1}{N}\right),
\end{align}
where,
\begin{align}
\psi_{S^*}(s) &= \alpha s - (1{-}\alpha)
  - (1{-}\alpha)\log\left(\frac{\alpha s}{1{-}\alpha}\right),\quad &&\text{for $s>0$},
  \label{eq:psi-exp}\\
  \psi_{\widetilde S}(s)&= - \alpha \log \left(\frac{1-s}{\alpha}\right) -(1-\alpha) \log \left(\frac{s}{1-\alpha}\right),\quad&&\text{ for $0<s<1$}.
  \label{eq:psi-exp_rel}
\end{align}
For a numerical illustration, see Fig.~\ref{fig:LDF}.

\begin{figure}
    \centering
\includegraphics[width=.8\linewidth]{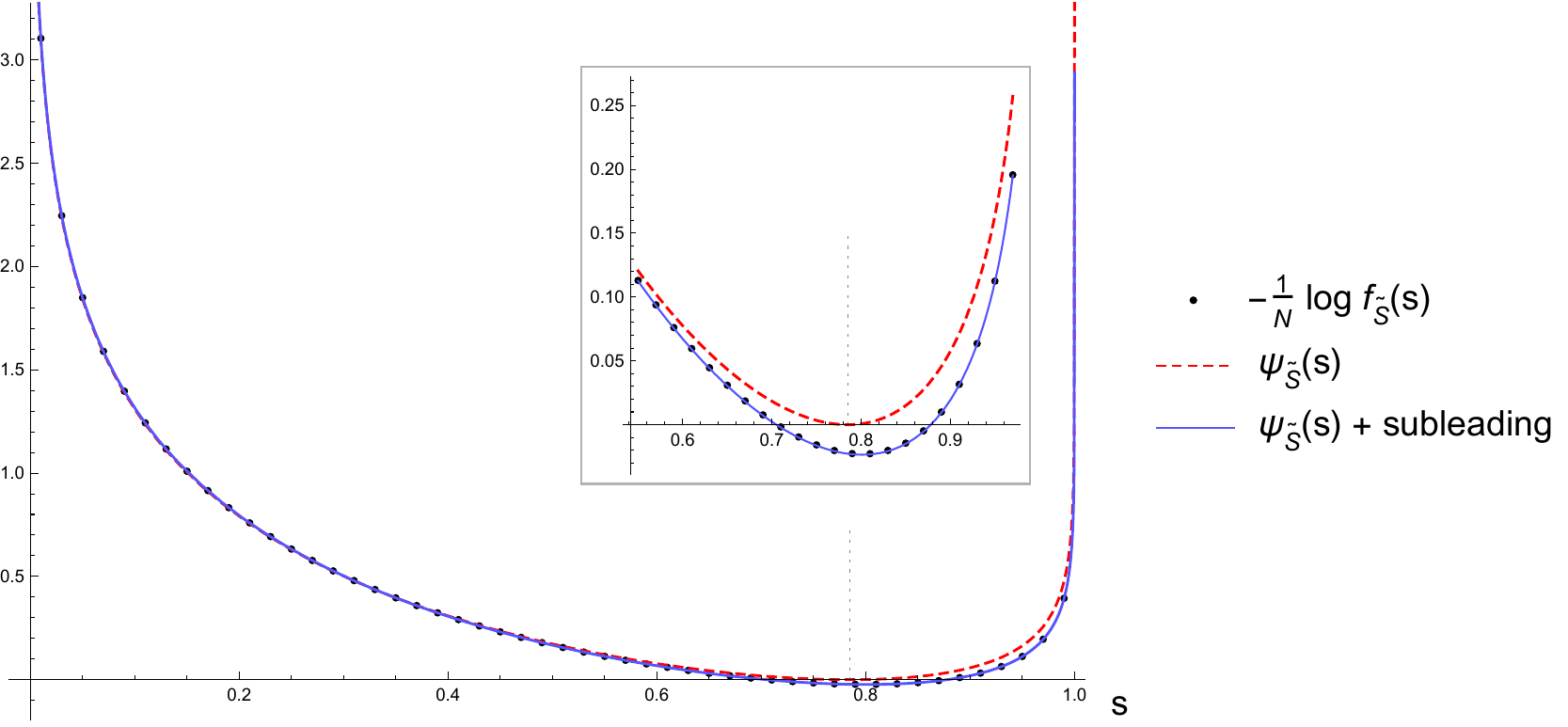}
    \caption{Comparison of $-(1/N)\log f_{M\widetilde{S}}(s)$ (black dots) for exponential gaps with $N=100$ and $\alpha=0.215$, against the rate function $\psi_{\widetilde{S}}(s)$ (red dashed curve) in~\eqref{eq:psi-exp_rel} and the asymptotics~\eqref{eq:subleading-exp_rel} including the first subleading correction (blue solid curve). The inset is a zoom around the typical value $\tilde{s}= 0.785...$. }
    \label{fig:LDF}
\end{figure}

\subsection{Geometric gaps} 
We now consider a discrete case where the random points $P_0<P_1<\cdots<P_{N}$ are located on the integers, so that the max-min spacing $S^*$ is integer-valued.

Suppose that the $T_j$'s are independent geometric random variables  with parameter $0<p<1$:
\[
  \PR(T_{{j}}=k)=(1-p)^{k-1}\,p, \qquad k=1,2,3,\ldots
\] 
We can compute  the PGF of the first-passage time $\tau_1(s)$ using~\eqref{eq:pgf-telescopic}:
\[
p_s(z) = \EE[z^{\tau_1(s)}]=z-(1-z)\sum_{k=1}^{s-1}\PP\left(P_k<s\right)z^k,
\]
where we used $\PP\left(P_k<s\right)=0$ for all $k=s,s+1,\ldots$. The distribution of $P_k$ (a sum of i.i.d. geometric random variables) is a negative binomial distribution, and this gives ($q=1-p$):
\begin{align}
  p_s(z) = \EE[z^{\tau_1(s)}]=z\,(q+pz)^{s-1},\quad \text{for all $s=1,2,\ldots$}.
  \label{eq:pgf-geom}
\end{align}
Applying formula~\eqref{eq:main}, for integer thresholds $s$, we get
\begin{equation}
\label{eq:max-min-bin}
    \PR(S^*\ge s)
  =  \sum_{k=0}^{N-M}\binom{M(s-1)}{k}\,p^k\,q^{M(s-1)-k},\quad s=1,2,\ldots.
\end{equation}
Equivalently,
\begin{equation}
\label{eq:max-min-bin2}
    \PR(S^*\ge s)
  = \PR\left(\operatorname{Bin}(M(s-1),p)\le N{-}M\right),
\end{equation}
where $\operatorname{Bin}(n,p)$ denotes a \textbf{binomial} random variable with  distribution $p(k)=\binom{n}{k}p^k(1-p)^{n-k}$, for $k=0,\ldots, n$.

Formula~\eqref{eq:max-min-bin2} is the geometric (discrete) analogue of~\eqref{eq:claimed_identity_Poisson} for exponential (continuous) spacings.

\begin{rem}
\label{rem:memoryless}
    The memoryless property of exponential and geometric gaps is reflected in the distribution of the
first-passage time:
\begin{align}
    \tau_1(s)-1 &\stackrel{d}{=} \operatorname{Poisson}(s),\quad \text{for exponential gaps},\\
\tau_1(s)-1&\stackrel{d}{=}\operatorname{Bin}(s-1,p),\quad \text{for geometric gaps}.
\end{align}
The reproductive property of the Poisson and Binomial distribution leads to the closed formulae~\eqref{eq:claimed_identity_Poisson} and ~\eqref{eq:max-min-bin2}, respectively.
\end{rem}

\section{Iterative Interval Refinement Method}
\label{sec:num}

For large values of $N$ and $M$, an exhaustive search over the
$|I_{M,N}|=\binom{N-1}{M-1}$ possible selections of points is computationally
infeasible. The correspondence to the  threshold-resetting (Lemma~\ref{lem:equiv}) suggests an iterative
procedure to compute the max--min spacing, together with an optimal selection
of $M+1$ points. The method
can be viewed as a variant of the bisection method for root finding.
{Threshold-search procedures are classical in the literature on $p$-dispersion, although the feasibility problem associated with a fixed threshold is treated differently in the general and in the one-dimensional settings. In the general setting, ~\cite{Erkut1990} introduced the fixed-threshold $r$-separation problem and proposed integer-programming and
branch-and-bound procedures within a search over the separation parameter. Other approaches also include maximum-independent-set feasibility
tests~\cite{SayahIrnich2017} or binary search over the sorted set of
pairwise distances, followed by candidate testing through set-packing feasibility problem ~\cite{FrancoUchoa2015}.

For points already ordered on a line, the feasibility test is considerably simpler: no integer program or graph-optimization problem has to be solved, since feasibility at a prescribed threshold is decided exactly by the greedy construction detailed below. This is the same basic algorithmic principle commonly presented in the competitive-programming literature under the name ``aggressive cows'' (see, for example, \cite{aggressiveCows}){, where one wishes to assign $M+1$ cows to $N+1$ stalls such that the minimum distance between any two cows is maximized}.
Other specialized exact algorithms exploiting the ordering of points on a line also include~\cite{RRT94} and~\cite{AkagiEtAl2018,araki2022max}.

The procedure used here builds on this classical one-dimensional
feasibility mechanism, but expresses it directly in terms of the
threshold-resetting construction of Lemma~\ref{lem:equiv}: for each
trial threshold, the indices selected by the greedy scan coincide with
the successive threshold-crossing indices. Combining this test with
bisection therefore provides a simple numerical implementation of the
probabilistic construction developed above. We now describe the
procedure explicitly in the notation of the present paper.
}

Let $ P_0<P_1<\cdots<P_N$
be the ordered initial point configuration, not necessarily random. 
Observe that, for all $M\leq N$, the max--min spacing $s^*$ satisfies
\begin{equation}
    0\leq s^* \leq \frac{P_N-P_0}{M}.
\end{equation}
We therefore initialise the search interval as
\begin{equation}
    [a,b]=\left[0,\frac{P_N-P_0}{M}\right].
\end{equation}
At each iteration, we bisect the current interval and set
\begin{equation}
    s=\frac{a+b}{2}.
\end{equation}
We then test whether the threshold $s$ is feasible by the greedy construction
suggested by Lemma~\ref{lem:equiv}. Starting from the left endpoint, set
\begin{equation}
    j_0:=0,
\end{equation}
and define recursively
\begin{equation}
    j_r:=\min\{\ell>j_{r-1}: P_\ell-P_{j_{r-1}}\geq s\},
    \qquad r=1,\ldots,M,
    \label{eq:greedy-indices}
\end{equation}
whenever the set on the right-hand side is non-empty. If at some step the set
is empty, the construction stops and the threshold $s$ is declared infeasible.

By Lemma~\ref{lem:equiv}, the threshold $s$
is feasible if and only if the greedy construction completes $M$ steps no
later than the final point, namely
\begin{equation}
    s^*\geq s
    \quad\Longleftrightarrow\quad
    j_M\leq N.
    \label{eq:feasibility-test}
\end{equation}
Equivalently, if \eqref{eq:greedy-indices} produces indices
$0=j_0<j_1<\cdots<j_M\leq N$, then the points
\begin{equation}
    P_{j_0},P_{j_1},\ldots,P_{j_{M-1}},P_N
\end{equation}
form an admissible selection whose consecutive spacings are all at least $s$.

Accordingly, the interval is refined as follows. If the threshold $s$ is
feasible, we continue the search in the right subinterval by setting
\begin{equation}
    a:=s.
\end{equation}
If the threshold $s$ is infeasible, we continue the search in the left
subinterval by setting
\begin{equation}
    b:=s.
\end{equation}
After a prescribed number of iterations, or once the desired precision is
reached, the midpoint of the final interval is returned as an approximation of
$s^*$.
The same procedure also returns an approximately optimal selection of points:
one runs the greedy construction \eqref{eq:greedy-indices} at the final feasible
threshold and retains the corresponding points, together with the right
endpoint $P_N$.

We used Theorem~\ref{thm:main-exp} as a benchmark for validating the algorithm.
For values of $N$ and $M$ for which exhaustive search is computationally
infeasible, the numerically computed max--min spacing can be compared with the
exact distribution given in~\eqref{eq:density_S^*}. Fig.~\ref{fig:largesize}
presents the results of numerical experiments conducted on several realisations
of random points $P_0<\cdots<P_N$ with i.i.d.\ exponential spacings.

\begin{figure}
    \centering
\includegraphics[width=0.55\linewidth]{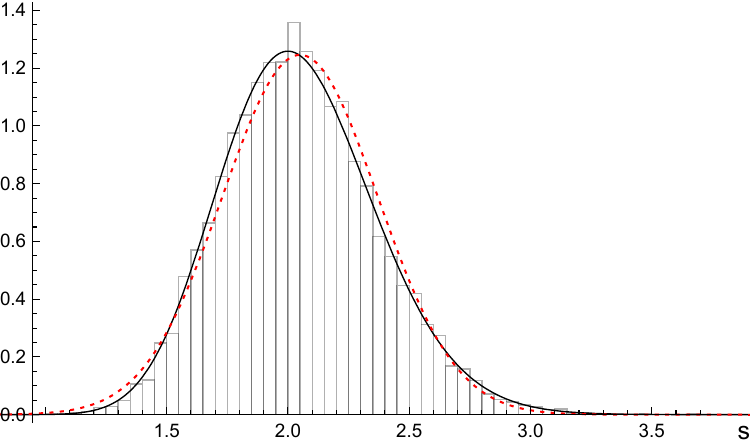}
    \caption{Max-min spacing $S^*$ for points with random i.i.d. exponential gaps. Numerical simulations (histograms) over a sample size of $10^4$  obtained using the iterative interval refined method described in Sec.~\ref{sec:num}. In black the explicit formula~\eqref{eq:density_S^*}. The dotted-red curve is the Gaussian approximation~\eqref{eq:conv_S^*1}. Here $N=60$, and $M=20$. (Total number of selections: $|I_{M,N}|=\binom{59}{19}\simeq 10^{15}$.)}
    \label{fig:largesize}
\end{figure}

\section{Proofs}
\label{sec:proofs}

%
%%%%%
\begin{proof}[Proof of Lemma~\ref{lem:equiv}] Let $(\tau_k(s))_{k\geq 0}$ be the renewal times defined in~\eqref{def:ResetTime} with $\tau_0(s):=0$. The second equivalence is immediate: $K_N(s)\geq M$ (the reset-to-zero process completes at least $M$ excursions in $N$ steps), if and only if  $\tau_M(s)\leq N$ (the reset-to-zero process reaches the threshold $s$ for the $M$-th time no later than the $N$-th step). 

Now, suppose $\tau_M(s)\leq N$.  Consider the selection of indices  defined by the first $M-1$ reset-to-zero times 
    \begin{equation}
        i^s=(i_0^s,i_1^s,\ldots,i^s_{M-1}, i_M^s)=\left(0,\tau_1(s),\ldots, \tau_{M-1}(s),N\right).
    \end{equation}
    This is a valid grouping $i^s\in I_{M,N}$: the blocks are consecutive, they cover
all $N$ gaps, and there are exactly $M$ of them.  
    By definition of the $\tau_k(s)$'s,
    \begin{equation*}
\begin{aligned}
 P_{i^s_1}-P_{i^s_0}=   P_{\tau_1(s)}-P_{0}&\geq s,\\ 
  P_{i^s_2}-P_{i^s_1}=  P_{\tau_2(s)}-P_{\tau_1(s)}&\geq s,\\
   \vdots &\\
    P_{i^s_{M-1}}-P_{i^s_{M-2}}=    P_{\tau_{M-1}(s)}-P_{\tau_{M-2}(s)}&\geq s\\
   P_{i^s_M}-P_{i^s_{M-1}}=  P_{N}-P_{\tau_{M-1}(s)}&\geq P_{\tau_M(s)}-P_{\tau_{M-1}(s)}\geq s. 
\end{aligned}    
    \end{equation*}
All spacings arising by this selection of indices in $I_{M,N}$ are at least $s$.  Therefore, 

$$
S^*\geq \min_{1\leq j\leq M}(P_{i^s_{j}}-P_{i^s_{j-1}})\geq s.
$$
Conversely, assume that $S^*\geq s$. Then, there exists a selection of indices
$$0=i^*_0< i^*_1<\cdots<i^*_{M-1}<i^*_M=N$$
such that for all $1\leq j\leq M$,
$$
P_{i^*_{j}}-P_{i^*_{j-1}}\geq s.
$$
Since the $\tau_k$'s are first-passage times, 
    \begin{equation*}
\begin{aligned}
\tau_1(s)=i^s_1&\leq i^*_1,\\ 
 \tau_2(s)=i^s_2&\leq i^*_2,\\
   \vdots &\\
   \tau_{M-1}(s)=i^s_{M-1}&\leq i^*_{M-1}\\
   \tau_M(s)&\leq i^*_M=N. 
\end{aligned}    
    \end{equation*}
Hence $\tau_M(s)\leq N$.
\end{proof}

\begin{proof}[Proof of Theorem~\ref{thm:main}] 
As already remarked, the sequence $(\tau_k(s))_{k\geq 0}$ forms a renewal process: the cycle lengths $L_1:=\tau_1(s)-\tau_0(s) $, $L_2:=\tau_2(s)-\tau_1(s)$,\ldots, are i.i.d. copies of the first-passage time $\tau_1(s)$. 

By Lemma~\ref{lem:equiv}, $S^*\geq s$ if and only if $\tau_M(s)=L_1+\cdots+L_M\leq N$. 
Therefore,
\begin{equation}
\PP\left(S^*\geq s\right)=\PP\left(\tau_M(s)\leq N\right)=\sum_{n=M}^N \PP\left(\tau_M(s)=n\right).
\end{equation}
Consider the PGF of $\tau_M(s)$:
\begin{equation}
\EE[z^{\tau_M(s)}]=\sum_{n\geq1}\PP\left(\tau_M(s)=n\right)z^n. 
\end{equation}
Then,
\begin{equation}
   \PP\left(S^*\geq s\right)=\sum_{n=M}^N[z^n] \EE[z^{\tau_M(s)}].
\end{equation}
Since $L_1,\ldots,L_M$ are independent, and distributed as $\tau_1(s)$,
\begin{equation}
    \EE[z^{\tau_M(s)}]=   \EE[z^{L_1+\cdots+L_M}]=\EE[z^{L_1}]\cdots\EE[z^{L_M}]=\EE[z^{\tau_1(s)}]^M=p_s(z)^M.
\end{equation} 
We conclude that
\begin{equation}
   \PP\left(S^*\geq s\right)=\sum_{n=M}^N[z^n] p_s(z)^M=\sum_{n=0}^N[z^n] p_s(z)^M=[z^N]\frac{p_s(z)^M}{1-z}.
\end{equation}
In the last steps we used the fact that the power series $p_s(z)^M$ has no term lower than $z^M$, and we then multiplied by a geometric series to obtain the cumulative sum of the first $N$ coefficients of the power series.
\end{proof}

\begin{proof}[Proof of Lemma~\ref{lem:telescopic}]
    The distribution of the first-passage time $\tau_1(s)$ is determined by the
law of gaps $T_j$'s. Indeed,
\begin{equation}
    \tau_1(s)
    =\min\left\{k>0: P_k\geq s\right\}
    =\min\left\{k>0: T_1+\cdots+T_k\geq s\right\}.
\end{equation}
For $k\geq1$, 
\begin{equation}
   \tau_1(s)=k \iff P_{k-1}<s \, \text{ and }\,   P_k\geq s,
\end{equation}
with the convention $P_0=0$. Since the increments $T_j$'s are strictly positive, the sequence $(P_k)_{k\geq0}$ is increasing, and therefore
$P_k<s \Rightarrow P_{k-1}<s$. Hence
\begin{align}
  \PR(\tau_1(s)=k)
  &= \PR(P_{k-1}<s,\ P_k\ge s) \notag\\
  &= \PR(P_{k-1}<s)-\PR(P_{k-1}<s,\ P_k<s) \notag\\
  &= \PR(P_{k-1}<s)-\PR(P_k<s).
  \label{eq:PL_proof}
\end{align}
Thus the distribution of $\tau_1(s)$ can be written as a
telescopic difference of the distribution functions of the partial sums
$P_{k-1}$ and $P_k$:
\begin{equation}
\begin{aligned}
    p_s(z)
    &=\sum_{k\geq1}
    \left[\PR(P_{k-1}<s)-\PR(P_k<s)\right]z^k\\
      &=\sum_{k\geq1}\PR(P_{k-1}< s)z^{k}-\sum_{k\geq1}\PR(P_{k}< s)z^{k}\\
    &=\sum_{k\geq0}\PR(P_{k}< s)z^{k+1}-\sum_{k\geq1}\PR(P_{k}< s)z^{k}\\
      &=z-(1-z)\sum_{k\geq1}\PR(P_{k}< s)z^{k},
    \label{eq:pgf-telescopic_proof}
    \end{aligned}
\end{equation}
where in the last line we used $\PP(P_{0}< s)=1$ for all $s>0$.

We now prove that, if the increments $T_j$'s are strictly positive, then $p_s(z)$ is entire in $z$. From~\eqref{eq:pgf-telescopic_proof}, it is enough to show that the series 
\begin{equation}
\label{eq:series_P_k}
   \sum_{k\geq1}\PR(P_{k}< s)z^{k} 
\end{equation}
is convergent for all $z\in\C$.
By the Cauchy--Hadamard theorem, the radius of convergence of~\eqref{eq:series_P_k} is
% \[
% p_s(z)=\sum_{k\ge 1}\mathbb P(\tau_1(s)=k)z^k
% \]
\begin{equation}
\label{eq:Cauchy-Hadamard}
  R=\frac{1}{\displaystyle
\limsup_{k\to\infty}\mathbb P(P_{k}< s)^{1/k}}.  
\end{equation}
We shall prove that, for all $s>0$,
\begin{equation}
\label{eq:claimed_limsup}
\limsup_{k\to\infty}\mathbb P(P_{k}< s)^{1/k}=0.
\end{equation}
Let $0<a<s$, and write
\begin{equation}
q(a)=\mathbb P(T_1\le a).
\end{equation}
Since $\mathbb P(T_1=0)=0$,  $q(a)\downarrow0$, as $a\downarrow0$.
Let 
$$ m=\left\lfloor \frac{s}{a}\right\rfloor.
$$
If $P_k< s$, then at most $m$ of the $k$ jumps $T_i$ can exceed $a$. Indeed, if
$m+1$ jumps were larger than $a$, then $
P_k> (m+1)a>s$.
 Therefore, for all $0\leq m\leq k$,
% \[
% \{P_n< s\}
% \subseteq
% \left\{\#\{1\le i\le n:T_i\geq a\}\le m\right\}.
% \]
% Thus
\[
\mathbb P(P_k\le s)
\le
\mathbb P\left(\#\{1\le i\le k:T_i> a\}\le m\right).
\]
Since the variables are independent, the number of jumps exceeding $a$ is a
binomial random variable with parameters $k$ and $1-q(a)$. Hence
\begin{equation}
\mathbb P(P_k\le s)
\le
\sum_{j=0}^{m}
\binom{k}{j}
(1-q(a))^j q(a)^{k-j}.
\end{equation}
We  estimate this upper bound as 
\[
\mathbb P(P_k\le s)
\le
q(a)^{k-m} 
\sum_{j=0}^{m}\binom{k}{j}\leq q(a)^{k-m} 2^k .
\]
Consequently,
\[
\limsup_{k\to\infty}\mathbb P(P_k < s)^{1/k}\leq \limsup_{k\to\infty}\mathbb P(P_k\le s)^{1/k}
\le 2 q(a),
\]
for all $a>0$, and hence the claim~\eqref{eq:claimed_limsup}.
\end{proof}
The following proof of Theorem~\ref{thm:LD}, is based on the steepest descent method. We will need to deform  the contour of integration in Cauchy's integral formula. These deformations are justified since $p_s(z)$ is analytic in the whole complex plane.

\begin{proof}[{Proof of Theorem~\ref{thm:LD}}]

{By Theorem~\ref{thm:main}} and Cauchy integral formula,
\begin{equation}\label{eq:P}
\PR(S^*\ge s)
  = [z^N] \frac{p_s(z)^M}{(1-z)}=\frac{1}{2\pi \mathrm{i}}\oint_{\gamma}
  \frac{p_s(z)^M}{(1-z)\,z^{N+1}}\,dz= \frac{1}{2\pi \mathrm{i}}\oint_{\gamma}
  \frac{e^{Ng(z)}}{(1{-}z)\,z}\,dz ,
\end{equation}
where $\gamma$ is a positively oriented simple contour  entirely contained in the unit disk and enclosing $z=0$, and 
\begin{equation}
    g(z)=\alpha\log p_s(z)-\log z.
\end{equation} 
By Lemma~\ref{lem:telescopic}, the integrand in~\eqref{eq:P} is meromorphic with poles at $z=0$ and $z=1$, only. 
Choose $\gamma$ to be a circle $C_r$ of radius $r$ centred at $0$. 
Since  $p_s(z)$ is a power series with positive coefficients, 
\begin{equation}
    |p_s(r e^{\mathrm{i} \theta})|=\left|\sum_{k\geq1} \mathbb{P}(\tau_1(s)=k){r^k e^{\mathrm{i}k\theta}}\right|\leq\sum_{k\geq1} \mathbb{P}(\tau_1(s)=k){r^k} =p_s(r),
\end{equation}
therefore
$\operatorname{Re}g(r{e^{\mathrm{i}\theta}})
$ has a global maximum at $\theta=0$.
For all $s>0$, choose $r=z(s)$, where $z(s)$ is a real and positive solution of $g'(z)=0$, i.e. $ \mu(z,s)=1/\alpha$, where $$\mu(z,s)=z \frac{p'_s(z)}{p_s(z)}=\EE [\tau_1(s,z)] .$$

The meaning of this saddle-point equation is the following.  The event
$\{S^*\geq s\}$ is equivalent, by Lemma~\ref{lem:equiv}, to completing $M$
renewal cycles by time $N$.  Hence, if $M/N=\alpha$, the
typical length of one cycle is $N/M=1/\alpha$.  The tilted law
$\tau_1(s,z)$ is precisely the exponential change of measure under which the
first-passage time has mean
\begin{equation}
    \mu(z,s)=\EE[\tau_1(s,z)] .
\end{equation}
Thus the saddle-point condition
\begin{equation}
    \mu(z,s)=\frac{1}{\alpha}
\end{equation}
selects the value of the tilt for which a typical renewal cycle has the length
required by the constraint.  In particular, $z<1$ biases the renewal process
towards shorter cycles, while $z>1$ biases it towards longer cycles.

By assumption, for each $s>0$ there exists a positive
solution $z(s)>0$ of the saddle-point equation
\begin{equation}
    \mu(z,s)=\frac{1}{\alpha},
    \qquad
    \mu(z,s):=
    z\frac{p_s'(z)}{p_s(z)}
    =
    \EE[\tau_1(s,z)] .
\end{equation}
Moreover, from \eqref{eq:TiltedMeanVar},
\begin{equation}
    \frac{d}{dz}\mu(z,s)
    =
    \frac{1}{z}\operatorname{Var}[\tau_1(s,z)].
\end{equation}
Hence, whenever $\operatorname{Var}(\tau_1(s,z))>0$, such a solution is unique.

Computing the second derivative, we have \begin{equation}
\label{eq:var_tauz}
    g''(z(s))=\alpha \frac{\operatorname{Var}[\tau_1(s,z(s))]}{z(s)^2}>0 .
\end{equation} 
 
On the circle $z=z(s)e^{\mathrm{i}\theta}$, the saddle equation gives
\begin{equation}
    g(z(s)e^{\mathrm{i}\theta})
    =
    g(z(s))
    -
    \frac{\alpha\sigma^2(s)}{2}\theta^2
    +
    O(\theta^3),
    \qquad \theta\to0 ,
\end{equation}
where $\sigma^2(s)=\operatorname{Var}(\tau_1(s,z(s)))$.
Thus $\theta=0$ is a nondegenerate maximum of
$\operatorname{Re}g(z(s)e^{\mathrm{i}\theta})$ along the circular contour.

Let
\begin{equation}
    \label{eq:rate_psi_g}
\psi(s)=-g(z(s))=-\operatorname{Re}g(z(s)).
\end{equation}
(Since $z(s)>0$, the quantity $g(z(s))$ is real.) 
Finally, the steepest descent direction for $\operatorname{Re}g(z)$ is parallel to the imaginary axis, and it is therefore aligned with the tangent to  $C_{z(s)}$ at $z(s)$. We are now prepared to apply the steepest descent method. 
\begin{figure}
\includegraphics[width=\textwidth]{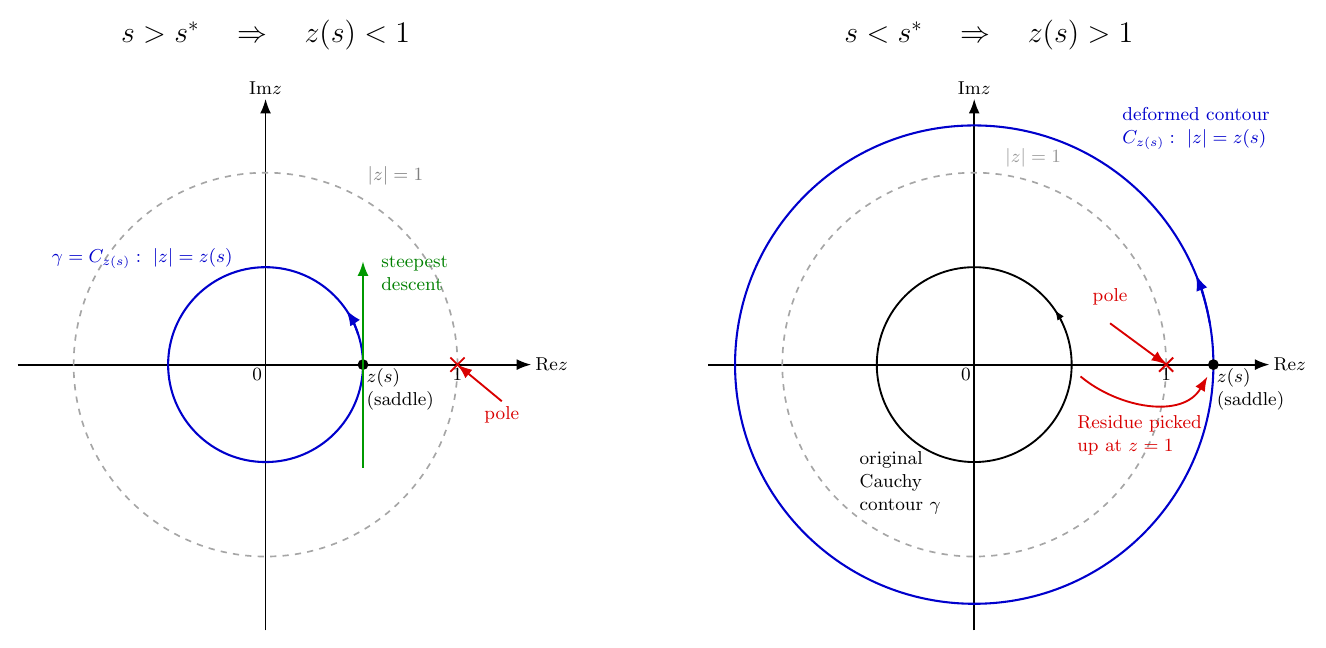}
\caption{Contours of integration appearing in the calculation of $\PR(S^*\geq s)$}
\label{fig:pic}
\end{figure}
By definition, $s^*$ is the solution of $\mu(1,s^*)=1/\alpha$. Hence $z(s^*)=1$ and, by monotonicity, %, $z(s)<1$ for $s>s^*$, while $z(s)>1$ for $s<s^*$. 
    \begin{equation}\label{eq:wts}
    \begin{cases}
        &s<s^* \Rightarrow z(s)>1\\
        &s=s^* \Rightarrow z(s)=1\\
        &s>s^*\Rightarrow z(s)<1 .
    \end{cases}
\end{equation}
Therefore, if $s>s^*$, the contour $C_{z(s)}$ is contained in the unit disk where the integrand $\frac{e^{Ng(z)}}{(1-z)z}$ has a single pole at $z=0$ (see Figure~\ref{fig:pic}). Therefore,
\begin{equation}
\begin{aligned}
\PP(S^*\geq s)
&=
\frac{1}{2\pi}
\int_{-\pi}^{\pi}
\frac{
\exp\{N g(z(s)e^{\mathrm{i}\theta})\}
}
{1-z(s)e^{\mathrm{i}\theta}}
\,d\theta\\
&=
\frac{e^{-N\psi(s)}}{
(1-z(s))\sqrt{2\pi\alpha N\sigma^2(s)}
}
\left[1+o(1)\right],
\end{aligned}
\end{equation}
which yields~\eqref{eq:right-box}.

For $s<s^*$, the contour deformation from $\gamma$ to $C_{z(s)}$ crosses the pole at $z=1$. Therefore,
\begin{equation}
 \PR(S^*\ge s)
 =\frac{1}{2\pi \mathrm{i}}\oint_{\gamma}
  \frac{e^{Ng(z)}}{(1{-}z)\,z}\,dz= \frac{1}{2\pi \mathrm{i}}\oint_{C_{z(s)}}
  \frac{e^{Ng(z)}}{(1{-}z)\,z}\,dz-\operatorname{Res}_{z=1}\left(\frac{e^{Ng(z)}}{(1{-}z)\,z}\right) . 
\end{equation}
 The residue is:
 \begin{equation}
\operatorname{Res}_{z=1}\left(\frac{e^{Ng(z)}}{(1{-}z)\,z}\right)=\lim_{z\to 1}(z-1)\frac{p_s(z)^M}{(1-z)z^{N+1}}=-p_s(1)^M=-1.
 \end{equation}
 Therefore,
\begin{equation}
\begin{aligned}
\PP(S^*<s)=1- \PR(S^*\geq s)
&=
-\frac{1}{2\pi}
\int_{-\pi}^{\pi}
\frac{
\exp\{N g(z(s)e^{\mathrm{i}\theta})\}
}
{1-z(s)e^{\mathrm{i}\theta}}
\,d\theta  \\
&=
\frac{e^{-N\psi(s)}}{
(z(s)-1)\sqrt{2\pi\alpha N\sigma^2(s)}
}
\left[1+o(1)\right],
\end{aligned}
\end{equation}
which matches~\eqref{eq:left-box}.
\end{proof}

\begin{proof}[Proof of Theorem~\ref{thm:main-exp}]
We prove the identity in distribution~\eqref{eq:identity_S^*} by computing the tail. 
It suffices to show~\eqref{eq:claimed_identity}  for all $s\ge0$.
(Differentiation then yields the density \eqref{eq:density_S^*}.) 
The identity is a specialisation of Theorem~\ref{thm:main}. Recall that $P_k=T_1+\cdots+T_{k}\stackrel{d}{=}\operatorname{Gamma}(k,1)$, for all $k\geq1$. Then, by~\eqref{eq:PL}, 
\begin{equation}
\begin{aligned}
\PP(\tau_1(s)=k)&=
\PP(P_{k-1}< s)- \PP(P_{k}< s)\\
&=\PP(P_k\geq s)- \PP(P_{k-1}\geq s)\\
&=
e^{-s}\sum_{j=0}^{k-1}\frac{s^j}{j!}-e^{-s}\sum_{j=0}^{k-2}\frac{s^j}{j!}=
e^{-s}\frac{s^{k-1}}{(k-1)!}.
\end{aligned}
\end{equation}
The PGF is therefore
\begin{equation}
    p_{s}(z)=\sum_{k\geq1}\PP(\tau_1(s)=k)\;z^k= z\,e^{s(z-1)}.
\end{equation}
The coefficient of $z^n$ of $p_{s}(z)^M$ for $n\ge M$ is
\[
  [z^n]\,z^M\,e^{Ms(z-1)}
  = [z^{n-M}]\,e^{Ms(z-1)}
  = e^{-Ms}\cdot\frac{(Ms)^{n-M}}{(n{-}M)!}.
\]
Therefore, by Theorem~\ref{thm:main}:
\begin{equation}\label{eq:poisson}
  \PR(S^*\geq s) = [z^N]\frac{p_{s}(z)^M}{1-z}=\sum_{n=M}^N[z^n]p_{s}(z)^M=e^{-Ms}\sum_{k=0}^{N-M}\frac{(Ms)^k}{k!}
\end{equation}
as claimed. 

\par

We now prove the  identity~\eqref{eq:identity_Stilde}. 
Let
\begin{equation}
    W:=T_1+\cdots+T_N=P_N-P_0
\end{equation}
and introduce the normalised points
\begin{equation}
    Y_j:=\frac{P_j}{W}
    =\frac{T_1+\cdots+T_j}{T_1+\cdots+T_N},
    \qquad \text{for $j=1,\ldots,N-1$},
\end{equation}
and $Y_0:=0$ and $Y_N:=1$. 
A classical calculation, shows that $W$ and $(Y_1,\ldots,Y_{N-1})$ are independent, with
\begin{equation}
    W\stackrel{d}{=}\operatorname{Gamma}(N,1),\quad     (Y_1,\ldots,Y_{N-1})
    \stackrel{d}{=}
    (U_{1:N-1},\ldots,U_{N-1:N-1}),
\end{equation}
where $U_{1:N-1}<\cdots<U_{N-1:N-1}$ are the order statistics of
$N-1$ independent uniform random variables on $[0,1]$.

Indeed, the change of variables
\begin{equation}
\begin{cases}
    t_1 = w y_1,\\
    t_j = w(y_j-y_{j-1}), \qquad j=2,\ldots,N-1,\\
    t_N = w(1-y_{N-1}),
\end{cases}
\end{equation}
maps the region $t_1,\ldots,t_N>0$ onto
\begin{equation}
    w>0,\qquad 0<y_1<\cdots<y_{N-1}<1.
\end{equation}
Its Jacobian is $w^{N-1}$ (see, e.g.~\cite[Appendix B.1]{Cunden20}). Since the $T_j$'s are i.i.d.\
$\operatorname{Exp}(1)$, their joint density is $e^{-(t_1+\cdots+t_N)}
    =e^{-w}$.    
Therefore,
\begin{equation}
\begin{aligned}
f_{Y_1,\ldots,Y_{N-1},W}(y_1,\ldots,y_{N-1},w) &=
w^{N-1}e^{-w}
\bm 1_{\{w>0\}}
\bm 1_{\{0<y_1<\cdots<y_{N-1}<1\}}  \\
&\qquad =
\frac{w^{N-1}e^{-w}}{\Gamma(N)}
\bm 1_{\{w>0\}}
\cdot
(N-1)!\,
\bm 1_{\{0<y_1<\cdots<y_{N-1}<1\}}.
\end{aligned}
\end{equation}
This proves the claim.

We now use the homogeneity of the max--min spacing: multiplying all
points by a positive constant multiplies all spacings, and hence also the
max--min spacing, by the same constant. Hence,
\begin{equation}
    S^*
    =
    W\,\widetilde S,
\end{equation}
where $\widetilde S$ is the max--min spacing of the normalised configuration
\begin{equation}
    0=Y_0<Y_1<\cdots<Y_{N-1}<Y_N=1.
\end{equation}
By the independence just proved, $\widetilde S$ is independent of $W$, and
the normalised configuration has the same law as
\begin{equation}
    0=U_{0:N-1}<U_{1:N-1}<\cdots<U_{N-1:N-1}<U_{N:N-1}=1.
\end{equation}
Thus the relative max--min spacing for exponential gaps coincides in law with
the max--min spacing of $N-1$ independent uniform points in the unit
interval, together with the two endpoints.

It remains to identify its density. Since $S^*=W\widetilde S$, with
$W\sim\operatorname{Gamma}(N,1)$ independent of $\widetilde S$, the density
of $S^*$ is the Mellin convolution
\begin{equation}
    f_{S^*}(s)
    =
    \int_0^\infty
    f_{\widetilde S}\!\left(\frac{s}{w}\right)
    f_W(w)\,\frac{dw}{w}.
\end{equation}
Substituting the claimed density~\eqref{eq:density_Stilde}
and
\begin{equation}
    f_W(w)=\frac{w^{N-1}e^{-w}}{\Gamma(N)},\qquad w>0,
\end{equation}
we get
\begin{equation}
\begin{aligned}
f_{S^*}(s)=
\binom{N-1}{M-1}M(M-1)(Ms)^{N-M}
\frac{1}{\Gamma(N)}
\int_{Ms}^{\infty}
\left(1-\frac{Ms}{w}\right)^{M-2}
w^{M-1}e^{-w}\,\frac{dw}{w}.
\end{aligned}
\end{equation}
By the change of
variables $r=w-Ms$, the integral evaluates to
$\Gamma(M-1)e^{-Ms}$, and hence we get exactly the density in~\eqref{eq:density_S^*}. This proves the claimed
density~\eqref{eq:density_Stilde} of $\widetilde S$, and therefore the distributional identity~\eqref{eq:identity_Stilde}.

\end{proof}

\section{Conclusions and Outlook} \label{sec:conclusions}

We considered the following selection problem on the line. Starting from
$N+1$  points with i.i.d.\ positive spacings, retain $M+1$ of them so
that the minimal spacing between consecutive retained points is as large as
possible. Although the optimization ranges over
$\binom{N-1}{M-1}$ strongly correlated admissible selections, the resulting
random variable admits a simple renewal description.

The key observation is the equivalence, stated in Lemma~\ref{lem:equiv},
between the event $\{S^*\geq s\}$ and the completion of at least $M$
threshold-crossing cycles by a random walk with positive increments, reset to
the origin whenever it crosses the level $s$. This mapping reduces the original
combinatorial optimization problem to the study of a
first-passage generating function. The resulting distribution-free formula,
given in Theorem~\ref{thm:main}, provides the tail distribution of $S^*$ for arbitrary i.i.d.\ spacings. As a byproduct, it gives the probability that a threshold-resetting random walk with arbitrary jump distribution completes at least $M$ resetting cycles in $N$ steps.

This representation is particularly effective to analyze the asymptotic regime 
$M,N\to\infty$ with $M/N=\alpha$. The saddle-point equation
\eqref{eq:saddle} acquires a probabilistic meaning: the saddle selects the
exponentially tilted law for which the typical cycle length matches the
constraint imposed by the ratio $M/N$.  This leads to the large-deviation estimates of
Theorem~\ref{thm:LD}, with upper and lower tail asymptotics given in
\eqref{eq:right-box} and \eqref{eq:left-box}. The explicitly solvable cases
illustrate the general theory. For exponential spacings, the max--min spacing
has the Gamma distribution in \eqref{eq:identity_S^*}, while the relative max--min
spacing has the Beta distribution in \eqref{eq:identity_Stilde}. For geometric spacings,
the corresponding discrete formula is expressed in terms of binomial tails, as
shown in \eqref{eq:max-min-bin} and \eqref{eq:max-min-bin2}.

Several directions are worth exploring. The renewal structure used
throughout the paper relies on the independence of the spacings. Correlated
increments, conditioned point processes, or interacting particle systems would
require different ideas and may produce  a different limiting behavior.

A second direction concerns other scaling regimes. In this paper we mainly
focused on the proportional regime $M/N=\alpha$, together with some boundary
regimes in the exponential case. The cases in which $M$ is fixed, $N-M$ is
fixed, or $M$ grows sublinearly with $N$ should exhibit different asymptotic
scales and may connect more directly with classical extreme-spacing theory.
Similarly, a refined analysis of the critical window around the typical
value defined by \eqref{eq:typical} would clarify the crossover between typical
fluctuations and the large-deviation tails described here.

From the computational point of view, the renewal characterization suggests
efficient decision procedures for a prescribed threshold $s$, and hence
bisection-type algorithms for approximating $S^*$ in large deterministic
instances. The exact formulae obtained here, especially in the exponential
case, provide benchmarks for such algorithms far beyond the range where
exhaustive search over all admissible selections is possible. 

Finally, the {methods used to analyze this model heavily rely on its one-dimensional geometry.} Extending the present exact
approach to higher-dimensional random point configurations is a challenging open problem. Understanding whether an
analogue of the threshold-crossing representation survives in special
geometries, or whether different probabilistic structures replace it, would
help connect the exact solvability found here with broader questions in random
combinatorial optimisation and spatial selection problems.

\section*{Conflicts of Interest}

The authors have no conflicts of interest to declare.

\section*{Data Availability}
{All synthetic data presented in the paper were generated through numerical simulations and can be reproduced using the methods described herein. The corresponding simulation codes are available from the authors upon reasonable request.}

\section*{Acknowledgments}
F.D.C. and G.G. are supported by Gruppo Nazionale di Fisica Matematica GNFM-INdAM and by Istituto Nazionale di Fisica Nucleare INFN through the project QUANTUM. F.D.C.  acknowledges the support from  PRIN 2022 project 2022TEB52W-PE1-
`The charm of integrability: from nonlinear waves to random matrices', and from PNRR MUR project CN00000013 `Italian National Centre on HPC, Big Data and Quantum Computing'. G.G. acknowledges support from PNRR MUR project PE0000023-NQSTI and from the project ``Patto territoriale sistema universitario pugliese". 
P.V. acknowledges support from UKRI FLF Scheme (No. MR/X023028/1) and gratefully acknowledges a discussion with Satya N. Majumdar on threshold-resetting problems. {The authors thank the two anonymous referees for their careful reading of the manuscript, helpful comments, and for pointing out several relevant references.}

\printbibliography
\end{document}